\documentclass{lmcs}
\pdfoutput=1

% LMCS Layouting Macros
\usepackage{lastpage}
\lmcsdoi{18}{3}{29}
\lmcsheading{}{\pageref{LastPage}}{}{}%
{Jul.~28,~2021}{Sep.~07,~2022}{}

\keywords{Games on graphs, parity games, mean payoff games, universal graphs}

\usepackage{hyperref}
\usepackage[norelsize,ruled,vlined]{algorithm2e}
\usepackage[utf8]{inputenc}
\usepackage{color}
\usepackage{graphicx}
\usepackage{subcaption}

\newcommand{\Z}{\mathbb Z}
\newcommand{\N}{\mathbb N}

\newcommand{\U}{\mathcal U}

\newcommand{\set}[1]{\left\{ #1 \right\}}

\newcommand{\game}{\mathcal{G}}
\newcommand{\auto}{\mathcal{A}}
\newcommand{\Graph}{\texttt{G}}

\newcommand{\VE}{\text{V}_{\text{Eve}}}
\newcommand{\VA}{\text{V}_{\text{Adam}}}
\newcommand{\WE}{\text{W}_{\text{Eve}}}

\newcommand{\last}{\text{last}}
\newcommand{\dist}{\text{dist}}
\newcommand{\col}{\text{col}}

\newcommand{\F}{\mathcal{F}}

\newcommand{\MP}[1]{\mathtt{MeanPayoff}(#1)}

\newcommand{\proj}{\text{Proj}}

\newcommand{\indeg}{\text{in-deg}}
\newcommand{\outdeg}{\text{out-deg}}

\newcommand{\chain}{\triangleright}
\newcommand{\mchain}{\ \tilde \chain \ }

\newcommand{\Path}{\text{Path}}
\newcommand{\FPath}{\text{Path}_{\text{fin}}}
\newcommand{\IPath}{\text{Path}_{\infty}}

\newcommand{\Det}{\text{Det}}

\newcommand{\Safe}{\mathtt{Safe}}
\newcommand{\Parity}{\mathtt{Parity}}

\newcommand{\len}{\text{len}}

\newcommand{\Inv}{\mathrm{Inv}}

\newcommand{\NP}{\text{NP}}
\newcommand{\coNP}{\text{coNP}}

\newcommand{\cnt}{\texttt{Count}}

\newcommand{\lex}{\mathrm{lex}}
\newcommand{\occ}{\mathrm{occ}}

\newcommand{\tleft}{\mathrm{left}}
\newcommand{\tmiddle}{\mathrm{middle}}
\newcommand{\tright}{\mathrm{right}}

\usepackage{tikz}
\usetikzlibrary{arrows}
\usetikzlibrary{automata}
\usetikzlibrary{shapes,snakes}
\usetikzlibrary{calc}
\usetikzlibrary{patterns}
\usetikzlibrary{shapes.geometric}
\usetikzlibrary{positioning}
\tikzstyle{every node}=[font=\small]
\tikzstyle{eve}=[circle,minimum size=.3cm,draw=gray!90,inner sep=1pt,fill=gray!20,very thick]
\tikzstyle{adam}=[rounded corners=.5,regular polygon,regular polygon
  sides=4,minimum size=.4cm,draw=gray!90,inner sep=1pt,fill=gray!20,very thick]
\tikzstyle{every edge}=[draw,>=stealth',shorten >=1pt]
\tikzstyle{win}=[fill=green!50,draw=green!70!black]
\tikzstyle{lose}=[fill=white,draw=red!70!black]

\tikzstyle{state}=[draw,circle,minimum size=5mm]
\tikzstyle{accepting}=[double]

\begin{document}

%\title{The Theory of Universal Graphs \\for Infinite Duration Games}
\title[The Theory of Universal Graphs for Infinite Duration Games]{The Theory of Universal Graphs\texorpdfstring{\\}{} for Infinite Duration Games}
\titlecomment{A preliminary version of this work was published in the accompanying paper for the invited talk given by Thomas Colcombet at the International Conference on Foundations of Software Science and Computation Structures (FoSSaCS)~\cite{CF19}, see also the corresponding technical reports~\cite{CF18} and~\cite{Fij18}.
The applications to mean payoff games (Section~\ref{sec:mean_payoff}) was published in the International Symposium on Mathematical Foundations of Computer Science (MFCS)~\cite{FGO18}.
The applications to disjunctions of mean payoff and parity games (Section~\ref{sec:disj_mean_payoff_parity})
and disjunctions of mean payoff games (Section~\ref{sec:disj_mean_payoff}) are unpublished.}

\author[T. Colcombet]{Thomas Colcombet\lmcsorcid{0000-0001-6529-6963}\rsuper{a}}
\address{CNRS, IRIF, Universit\'e de Paris, Paris, France}

\author[N. Fijalkow]{Nathana{\"e}l Fijalkow\lmcsorcid{0000-0002-6576-4680}\rsuper{b}}
\address{CNRS, LaBRI, Bordeaux, France, and
The Alan Turing Institute of data science, London, United Kingdom}

\author[P. Gawrychowski]{Pawe{\l} Gawrychowski\lmcsorcid{0000-0002-6993-5440}\rsuper{c}}
\address{University of Wroc{\l}aw, Wroc{\l}aw, Poland}

\author[P. Ohlmann]{Pierre Ohlmann\lmcsorcid{0000-0002-4685-5253}\rsuper{d}}
\address{IRIF, Universit\'e de Paris, Paris, France}

\begin{abstract}
We introduce the notion of universal graphs as a tool for constructing algorithms solving games of infinite duration such as parity games and mean payoff games.
In the first part we develop the theory of universal graphs, with two goals: showing an equivalence and normalisation result between different recently introduced related models, and constructing generic value iteration algorithms for any positionally determined objective.
In the second part we give four applications: to parity games, to mean payoff games, to a disjunction between a parity and a mean payoff objective, and to disjunctions of several mean payoff objectives. For each of these four cases we construct algorithms achieving or improving over the best known time and space complexity.
\end{abstract}

\maketitle

\section{Introduction}
\label{sec:intro}
\textbf{Games of infinite duration} are a widely studied model in several fields of Computer Science including Program Verification, Model checking, Automata Theory, Logic, Finite Model Theory, and Database Theory: 
the interactions between the players can model a range of (non-terminating) systems and are therefore used for analysing, verifying, and synthesising them.
There is a wide range of game models: in this paper we consider two player zero sum deterministic (as opposed to stochastic) games with perfect information.
Two of the most important open problems in this field concern parity games and mean payoff games:
in both cases the complexity of solving them is in $\NP$ and in $\coNP$, but not known to be in polynomial time.
This complexity status was once shared with primality testing, linear programming, and other famous problems, which all turned out to be solvable in polynomial time.
Yet although both problems have attracted a lot of attention over the past three decades, they remain widely open and exciting research directions.
This paper introduces the combinatorial notion of universal graphs and constructs conceptually simple algorithms achieving state of the art time and space complexity for both parity and mean payoff games, and beyond.
In order to motivate the introduction of universal graphs, let us revisit the recent developments on parity games.

\vskip1em 
\textbf{Parity games} are a central model in the study of logic and automata over infinite trees and for their tight relationship with model-checking games for the modal $\mu$-calculus.
The breakthrough result of Calude, Jain, Khoussainov, Li, and Stephan~\cite{CJKLS17} was to construct a quasipolynomial time algorithm for solving parity games. 
Following decades of exponential and subexponential algorithms, this very surprising result triggered further research: soon after further quasipolynomial time algorithms were constructed reporting almost the same complexity.
Let us classify them in two families; we will motivate this classification in the next paragraphs.
\begin{itemize}
	\item The first family of algorithms includes the original algorithm by Calude, Jain, Khoussainov, Li, and Stephan~\cite{CJKLS17} (see also~\cite{FJSSW17} for a presentation of the algorithm as value iteration), then the succinct progress measure lifting algorithm by Jurdzi{\'n}ski and Lazi{\'c}~\cite{JL17}, and the register games algorithm by Lehtinen~\cite{Leh18} (see also the journal version~\cite{LehtinenB20}).

	\item The second family of algorithms is so-called McNaughton Zielonka-type algorithms, inspired by the exponential time algorithm by Zielonka~\cite{Zielonka98} specialising an algorithm due to McNaughton~\cite{McNaughton:1993}. 
The first quasipolynomial time algorithm is due to Parys~\cite{Parys19}. Its complexity was later improved by Lehtinen, Schewe, and Wojtczak~\cite{LSW19}.
Recently Jurdzi{\'n}ski and Morvan~\cite{JM20} constructed a universal attractor decomposition algorithm encompassing all three algorithms, which was then refined and further studied in~\cite{JMOT20}.
\end{itemize}

\vskip1em 
\textbf{Separating automata} have been introduced by Boja{\'n}czyk and Czerwi{\'n}ski~\cite{BC18} in an effort to understand the first quasipolynomial time algorithm: they have extracted from the algorithm the construction of a separating automaton and presented the original algorithm as solving a safety game obtained as the product of the original game by the separating automaton.
Later Czerwi{\'n}ski, Daviaud, Fijalkow, Jurdzi{\'n}ski, Lazi{\'c}, and Parys~\cite{CDFJLP18} showed that the other two algorithms also induce separating automata.
Separating automata are deterministic; we refer to Subsection~\ref{subsec:complexity_parity} for a more in-depth discussion on the use of non-determinism.

\vskip1em 
\textbf{Universal trees} have been defined by Fijalkow~\cite{Fij18} (superseded by~\cite{CDFJLP18}) to rephrase the second quasipolynomial time algorithm~\cite{JL17}, constructing a family of value iteration algorithms parametrised by the choice of a universal tree. Within this family the algorithm~\cite{JL17} is optimal: the universal tree (implicitly) constructed in~\cite{JL17} matches (up to a polynomial factor) the lower bounds offered in~\cite{Fij18}.
The main contribution of~\cite{CDFJLP18} is to show that any separating automaton contains a universal tree in its set of states;
in other words, the three algorithms in the first family induce three (different) constructions of universal trees.

The generic algorithm constructed in~\cite{JM20} is parametrised by the choice of two universal trees (one of each player),
and it is argued that choosing appropriate pairs of universal trees yields the algorithms from~\cite{Zielonka98,Parys19,LSW19}.

\vskip1em
\textbf{Ubiquity of universal trees.}
In summary, all quasipolynomial time algorithms for parity games (and some exponential ones) fall in one of two families, and both are based on the combinatorial notion of universal trees.
This has a number of consequences:
\begin{itemize}
	\item \textit{simplicity}: all algorithms are instances of one of two generic algorithms,
	\item \textit{combinatorial abstraction}: algorithms have been reduced to a purely combinatorial notion on trees which is easier to study and captures the combinatorial contents and complexity of the algorithms,
	\item \textit{lower bounds}: the lower bounds on universal trees imply a complexity barrier applying to both families of algorithms, hence to all known quasipolynomial time algorithms.
\end{itemize}

\vskip1em
\textbf{Beyond parity games.}
There are a number of other objectives of interest, the most prominent one being mean payoff.
We introduce universal graphs for any objective which is positionally determined for at least one of the two players
(with some additional technical assumptions, see Subsection~\ref{subsec:assumptions}) 
with the following goals: constructing generic algorithms, 
reducing the construction of algorithms to the combinatorial understanding of universal graphs,
and offering lower bounds to argue about the optimality of the algorithms within this framework.
As we shall see, when the objective corresponds to parity games, universal graphs instantiate as universal trees.

\vskip1em
\textbf{Our contributions and outline.}
The first part of the paper is to develop the theory of universal graphs.

After the mandatory preliminaries in Section~\ref{sec:preliminaries},
we start in Section~\ref{sec:reductions} by constructing a generic algorithm for solving games by reduction to safety games.
We present three sufficient conditions for this algorithm to be correct:
using a separating automaton, using a good-for-small-games automaton, or using a universal graph.
%The first two are safety automata, and actually separating automata are the deterministic subclass of 
%good-for-small-games automata, and the third is a graph: 
%a first and important remark is that safety automata and graphs are the same thing.
We give in Section~\ref{sec:saturation} the main technical result of the theory of universal graphs: 
it is a normalisation result through a saturation process showing the equivalence of the three models.
A first consequence of this normalisation is an efficient implementation of the generic algorithm
defined in Section~\ref{sec:reductions} as a value iteration algorithm, constructed in Section~\ref{sec:value_iteration}.

\vskip1em 
The second part of the paper is to give four applications of the universal graph approach: 
for four classes of objectives, we construct universal graphs and derive algorithms for solving the corresponding games. 
Since the complexity of the algorithm is proportional to the size of the universal graph, our goal is to construct as small as possible universal graphs.
In all four cases our algorithms achieve the best known time and space complexity for solving the corresponding games, even significantly improving the state of the art in one case.
Along the way we will explain that many existing algorithms are actually instances of our framework, 
meaning are solving implicitly or explicitly a product game for some universal graph.

In these four applications we use two different approaches for studying universal graphs:
\begin{itemize}
	\item Both for parity and mean payoff objectives, we first study the properties of saturated universal graphs,
	find a combinatorial characterisation: using trees for parity objectives (Section~\ref{sec:parity})
	and subsets of the integers for mean payoff objectives (Section~\ref{sec:mean_payoff}).
	Using these combinatorial characterisations we prove upper and lower bounds on the size of universal graphs: 
	the upper bounds induce algorithms for solving the corresponding games,
	and the lower bounds imply that the algorithms we construct are optimal within this class of algorithms.
	
	\item Both for disjunctions of a parity and a mean payoff objective, and for disjunctions of several mean payoff objectives\footnote{Although this distinction is irrelevant for mean payoff games, it matters whether one uses the $\limsup$ or $\liminf$ semantics when considering their disjunctions (see~\cite{VelnerC0HRR15}). Here, we only consider $\liminf$ semantics, which is algorithmically tractable.},
	we investigate how to combine existing universal graphs for each objective into universal graphs for their disjunctions.
	We provide some general results towards this goal, in particular a reduction to strongly connected graphs 
	in Section~\ref{sec:disj_mean_payoff}.
	For disjunctions of a parity and a mean payoff objective (Section~\ref{sec:disj_mean_payoff_parity}) 
	it is more convenient to reason with separating automata, 
	which thanks to our general equivalence theorem (Theorem~\ref{thm:main})
	is equivalent with respect to size, our main concern for algorithmic applications.	
\end{itemize}

\section{Preliminaries}
\label{sec:preliminaries}
We write $[i,j]$ for the interval $\set{i,i+1,\dots,j-1,j}$, and use parentheses to exclude extremal values,
so for instance $[i,j)$ is $\set{i,i+1,\dots,j-1}$.
We let $C$ denote a set of colours and write $C^*$ for finite sequences of colours (also called finite words),
$C^+$ for finite non-empty sequences, and $C^\omega$ for infinite sequences (also called infinite words).
The empty word is $\varepsilon$.

\subsection{Graphs are safety automata and safety automata are graphs}~\\[-1.5em]

\paragraph*{Graphs}
We consider edge labelled directed graphs: 
a graph $G$ is given by a (finite) set $V$ of vertices and a (finite) set $E \subseteq Q \times C \times Q$ of edges,
with $C$ a set of colours, so we write $G = (V,E)$.
An edge $(v,c,v')$ is from the vertex $v$ to the vertex $v'$ and is labelled by the colour $c$.
The size of a graph is its number of vertices.
A vertex $v$ for which there exists no outgoing edges $(v,c,v') \in E$ is called a sink.

\paragraph*{Paths}
A path $\pi$ is a (finite or infinite) sequence 
\[
\pi = v_0 c_0 v_1 c_1 v_2 \dots
\]
where for all $i$ we have $(v_i,c_i,v_{i+1}) \in E$.
If it is finite, that is, $\pi= v_0 c_0 \dots c_{i-1} v_i$, we call $\len(\pi) = i \geq 0$ the length of $\pi$, 
and use $\last(\pi)$ to denote the last vertex $v_i$.
The length of an infinite path is $\infty$.
We say that $\pi$ starts from $v_0$ or is a path from $v_0$, 
and in the case where $\pi$ is finite we say that $\pi$ is a path ending in $\last(\pi)$ or simply a path to $\last(\pi)$.
We say that $v'$ is reachable from $v$ if there exists a path from $v$ to $v'$.
We let $\pi_{\le i}$ denote the prefix of $\pi$ ending in $v_i$, meaning 
$\pi_{\le i} = v_0 c_0 v_1 \dots c_{i-1} v_i$.
A cycle is a path from a vertex to itself of length at least one.

We use $\FPath(G), \IPath(G)$ and $\Path(G)$ to denote respectively the sets of finite paths of $G$, infinite paths of $G$, and their union. We sometimes drop $G$ when it is clear from context. 
For $v_0 \in V$ we also use $\FPath(G,v_0), \IPath(G,v_0)$ and $\Path(G,v_0)$ 
to refer to the sets of paths starting from $v_0$. 
We use $\col(\pi)$ to denote the (finite or infinite) sequence of colours $c_0 c_1 \dots$ induced by $\pi$.

\paragraph*{Objectives}
An objective is a set $\Omega \subseteq C^\omega$ of infinite sequences of colours.
We say that a sequence of colours belonging to $\Omega$ satisfies $\Omega$, and extend this terminology to infinite paths $\pi$ with $\col(\pi) \in \Omega$.

\begin{defi}[Graphs satisfying an objective]
Let $\Omega$ be an objective and $G$ a graph.
We say that $G$ satisfies $\Omega$ if all infinite paths in $G$ satisfy $\Omega$.
\end{defi}

\paragraph*{Safety automata}
A safety automaton over the alphabet $C$ is given by a finite set of states $Q$ and a transition function 
$\Delta \subseteq Q \times C \times Q$, so we write $\auto = (Q,\Delta)$.
Since we are only considering safety automata in this paper, we omit the adjective safety and simply speak of an automaton.
A first very important remark is that a (safety) automaton is simply a graph; 
this point of view will be very important and fruitful in this paper.

An automaton $\auto$ recognises the language $L(\auto)$ of infinite words defined as follows:
\[
L(\auto) = \set{\col(\pi) : \pi \text{ is an infinite path in } \auto}.
\]
The class of deterministic automata has two additional properties: 
the automaton $\auto$ has a designated initial state $q_0$, and for any state $q \in Q$ and colour $c \in C$ 
there is at most one transition of the form $(q,c,q')$ in $\Delta$.
In that case we define $\delta : Q \times C \to Q$ such that 
$\delta(q,c)$ is the unique $q'$ where $(q,c,q') \in \Delta$ if it is defined. 
We extend $\delta$ to sequences of colours by the formulas 
$\delta^*(q,\epsilon) = q$ and $\delta^*(q,wc) = \delta(\delta^*(q,w),c)$.
The language recognised by a deterministic automaton $\auto$ is $L(\auto)$ defined by
$L(\auto) = \set{\col(\pi) : \pi \text{ is an infinite path from $q_0$}}$.
Note that if $w$ is a finite prefix of a word in $L(\auto)$, then $\delta^*(q_0,w)$ is well defined.

Following our convention to identify non-deterministic automata and graphs, we say that $\auto$ satisfies $\Omega$ 
if $L(\auto) \subseteq \Omega$.

\subsection{Games}~\\[-1.5em]

\paragraph*{Arenas}
An arena is given by a graph $G=(V,E)$ together with a partition $V = \VE \uplus \VA$ of its set of vertices describing which player controls each vertex.
We represent vertices controlled by Eve with circles and those controlled by Adam with squares.

\paragraph*{Games}
A game is given by an arena and an objective $\Omega$.
We often let $\game$ denote a game, and use $G=(V,E)$ for the underlying graph.
The size of $\game$ is defined to be the size of the underlying graph $G$.
It is played as follows.
A token is initially placed on some vertex $v_0$, and the player who controls this vertex pushes the token along an edge, 
reaching a new vertex; the player who controls this new vertex takes over, 
and this interaction goes on, either forever and describing an infinite path, or until reaching a sink.

We say that a path is winning\footnote{We always take the point of view of Eve, so winning means winning for Eve, and similarly a strategy is a strategy for Eve.} if it is infinite and satisfies $\Omega$, or finite and ends in a sink controlled by Adam.
The definition of a winning path includes the following usual convention: if a player cannot move they lose,
in other words sinks controlled by Adam are winning (for Eve) and sinks controlled by Eve are losing.

We extend the notations $\FPath, \IPath$ and $\Path$ to games by considering the underlying graph.

\paragraph*{Strategies}
A strategy in $\game$ is a partial map $\sigma : \FPath(\game) \to E$ such that $\sigma(\pi)$ is 
an outgoing edge of $\last(\pi)$ when it is defined.
We say that a path $\pi = v_0 c_0 v_1 \dots$ is consistent with $\sigma$ if
for all $i < \len(\pi)$, if $v_i \in \VE$ then $\sigma$ is defined over $\pi_{\leq i}$ and 
$\sigma(\pi_{\le i}) = (v_i, c_i, v_{i+1})$.
A consistent path with $\sigma$ is maximal if it is not the strict prefix of a consistent path with $\sigma$
(in particular infinite consistent paths are maximal).

A strategy $\sigma$ is winning from $v_0$ if all maximal paths consistent with $\sigma$ are winning.
Note that in particular, if a finite path $\pi$ is consistent with a winning strategy $\sigma$ and ends in a vertex which belongs to Eve, then $\sigma$ is defined over $\pi$.
We say that $v_0$ is a winning vertex of $\game$ or that Eve wins from $v_0$ in $\game$ 
if there exists a winning strategy from $v_0$, and let $\WE(\game)$ denote the set of winning vertices.

\paragraph*{Positional strategies}
Positional strategies make decisions only considering the current vertex. 
Such a strategy is given by $\widehat \sigma : \VE \to E$.
A positional strategy induces a strategy $\sigma : \FPath \to E$ from any vertex $v_0$ 
by setting ${\sigma}(\pi) = \widehat{\sigma}(\last(\pi))$ when $\last(\pi) \in \VE$.

\begin{defi}[Positionally determined objectives]
We say that an objective $\Omega$ is \textit{positionally determined} if for every game with objective $\Omega$ and vertex $v_0$,
if $v_0 \in \WE(\game)$ then there exists a positional winning strategy from $v_0$.
\end{defi}

Given a game $\game$, a vertex $v_0$, and a positional strategy $\sigma$ we let $\game[\sigma,v_0]$ denote the graph obtained by restricting $\game$ to vertices reachable from $v_0$ by playing $\sigma$ and to the moves prescribed by $\sigma$.
Formally, the set of vertices and edges is
\[
\begin{array}{lll}
V[\sigma,v_0] & = & \set{v \in V : \text{there exists a path from } v_0 \text{ to } v \text{ consistent with } \sigma}, \\
E[\sigma,v_0] & = & \set{(v,c,v') \in E : v \in \VA \text{ or } \left( v \in \VE \text{ and }\right.
\left. \sigma(v) = (v,c,v') \right)} \\
& \cap & V[\sigma,v_0] \times C \times V[\sigma,v_0].
\end{array}
\]

\begin{defi}[Prefix independent objectives]
We say that an objective $\Omega$ is prefix independent if for all $u \in C^*$ and $v \in C^\omega$, 
we have $uv \in \Omega$ if and only if $v \in \Omega$.
\end{defi}

\begin{fact}\label{fact:game_mp}
Let $\Omega$ be a prefix independent objective, $\game$ a game, $v_0$ a vertex, and $\sigma$ a positional strategy.
Then $\sigma$ is winning from $v_0$ if and only if the graph $\game[\sigma,v_0]$ satisfies $\Omega$
and does not contain any sink controlled by Eve.
\end{fact}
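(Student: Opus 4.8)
The plan is to prove the two implications separately; in both directions the bridge is the observation that a finite path $\rho$ from $v_0$ witnesses $\last(\rho) \in V[\sigma,v_0]$ precisely when $\rho$ is consistent with $\sigma$, and that along such a path the edge leaving any vertex of Eve is exactly the one prescribed by the positional strategy $\sigma$ (which, being partial, may be undefined at some vertices of $\VE$). As usual ``consistent'' and ``winning'' refer to the strategy induced by $\sigma$, and one should keep in mind that $\game[\sigma,v_0]$ has vertex set $V[\sigma,v_0]$, so any path inside it stays inside $V[\sigma,v_0]$.

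For the implication from left to right, suppose $\sigma$ is winning from $v_0$. First I would rule out a sink $v$ of $\game[\sigma,v_0]$ controlled by Eve: fix a finite path $\rho$ from $v_0$ to $v$ consistent with $\sigma$; if $\sigma(v)$ is defined it extends $\rho$ into a longer consistent path whose last edge witnesses both that its target lies in $V[\sigma,v_0]$ and that $v$ is not a sink of $\game[\sigma,v_0]$, a contradiction; and if $\sigma(v)$ is undefined then $\rho$ is itself a maximal consistent path ending in a vertex of Eve, hence losing, contradicting that $\sigma$ is winning from $v_0$. Next, take an arbitrary infinite path $\pi$ in $\game[\sigma,v_0]$, say from $w \in V[\sigma,v_0]$, and a finite path $\rho$ from $v_0$ to $w$ consistent with $\sigma$. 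The concatenation $\rho\pi$ is an infinite path of $\game$ (all edges of $\pi$ lie in $E[\sigma,v_0] \subseteq E$), and it is consistent with $\sigma$: at each vertex of $\VE$ visited along $\pi$, the definition of $E[\sigma,v_0]$ forces the outgoing edge of $\pi$ to be the one chosen by $\sigma$. Hence $\rho\pi$ is a maximal (infinite) consistent path, hence winning, i.e. $\col(\rho)\col(\pi) = \col(\rho\pi) \in \Omega$; since $\col(\rho)$ is a finite word, prefix independence yields $\col(\pi) \in \Omega$. So $\game[\sigma,v_0]$ satisfies $\Omega$. This is the only place where prefix independence is used.

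For the converse, suppose $\game[\sigma,v_0]$ satisfies $\Omega$ and contains no sink controlled by Eve, and let $\pi = v_0 c_0 v_1 c_1 \cdots$ be a maximal path consistent with $\sigma$ from $v_0$. The key step is a short induction on $i$ showing that $\pi$ stays within $\game[\sigma,v_0]$: each $v_i$ lies in $V[\sigma,v_0]$ (witnessed by $\pi_{\le i}$ itself), and each edge $(v_i,c_i,v_{i+1})$ lies in $E[\sigma,v_0]$ — immediate if $v_i \in \VA$, and if $v_i \in \VE$ it is the edge prescribed by $\sigma$ because $\pi$ is consistent with $\sigma$. If $\pi$ is infinite it is therefore an infinite path of $\game[\sigma,v_0]$, hence satisfies $\Omega$ and is winning. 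If $\pi$ is finite with last vertex $v$, I would conclude from maximality that $v$ is a sink of $\game$ controlled by Adam, so that $\pi$ is winning: $v \in V[\sigma,v_0]$; if $v \in \VE$ then $v$ would be a sink of $\game[\sigma,v_0]$ as well (its only candidate outgoing edge there is $\sigma(v)$, which is unavailable since $\pi$ admits no consistent extension), contradicting the absence of Eve sinks; and if $v \in \VA$ had an outgoing edge in $\game$, that edge would extend $\pi$ into a consistent path (consistency imposes no constraint at vertices of Adam), contradicting maximality.

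I do not expect a genuine obstacle: the whole argument is a case analysis held together by the concatenation trick. The only points demanding a little care are the partiality of the positional map $\sigma$ — which is why both the ``no Eve sink'' step and the finite-play step split into sub-cases — and verifying that $\rho\pi$ is really consistent with $\sigma$, which is exactly what the clause for vertices of $\VE$ in the definition of $E[\sigma,v_0]$ delivers.
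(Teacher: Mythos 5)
The paper states Fact~\ref{fact:game_mp} without proof (it is treated as folklore), so there is no paper argument to compare against; your proof is correct and is essentially the canonical argument. Both directions are handled properly, including the two subtleties that actually require care here: that consistency of a path with the induced strategy at a vertex of $\VE$ is, by the definition of $E[\sigma,v_0]$, precisely membership of the outgoing edge in $E[\sigma,v_0]$, which is what makes the concatenation $\rho\pi$ consistent and conversely keeps a maximal consistent path inside $\game[\sigma,v_0]$; and that the positional map $\widehat\sigma$ must be treated as partial (otherwise Eve sinks could not exist), which is why you correctly split into the sub-cases $\sigma(v)$ defined versus undefined when identifying an Eve sink of $\game[\sigma,v_0]$ with a losing finite maximal play.
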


\paragraph*{Safety games}
The safety objective $\Safe$ is defined over the set of colours $C = \set{\varepsilon}$ by $\Safe = \set{\varepsilon^\omega}$: 
in words, all infinite paths are winning, so losing for Eve can only result from reaching a sink that she controls.
Since there is a unique colour, when manipulating safety games we ignore the colour component for edges.
Note that in safety games, strategies can equivalently be defined as maps to $V$ (and not $E$):
only the target of an edge matters when the source is fixed, since there is a unique colour.
We sometimes use this abuse of notations for the sake of simplicity and conciseness.

The following result is folklore, we refer to Section~\ref{sec:value_iteration} for more details on the algorithm.
The statement uses the classical RAM model with word size $w = \log(|V|)$.

\begin{thm}\label{thm:safety_games}
There exists an algorithm computing the set of winning vertices of a safety game running in time $O(|E|)$.
\end{thm}

\subsection{Assumptions on objectives}
\label{subsec:assumptions}

\begin{defi}[Neutral letter]
Let $\Omega \subseteq C^\omega$ an objective and $0 \in C$.
For $w \in C^\omega$ we let $\proj(w)$ denote the projection of $w$ on $C \setminus \set{0}$, that is, the finite or infinite word obtained by removing all occurrences of $0$ in $w$.
We say that $0$ is a \textit{neutral letter} if
for all $w \in C^\omega$, if $\proj(w)$ is finite then $w \in \Omega$, and if $\proj(w)$ is infinite,
then $w \in \Omega$ if and only if $\proj(w) \in \Omega$.
\end{defi}

In this paper, we will prove results applying to general objectives $\Omega$ satisfying three assumptions:
\begin{itemize}
	\item $\Omega$ is positionally determined,
	\item $\Omega$ is prefix independent,
	\item $\Omega$ has a neutral letter.
\end{itemize}
The first assumption is at the heart of the approach; the other two are here rather for technical convenience. In fact, it is not known whether there are positional objectives for which artificially adding a neutral letter breaks positionality~\cite{Kopczynski06}. Regarding prefix independence, we believe that our techniques can be adapted by fixing an initial vertex (both in the games under consideration and in the universal graph), but chose not to pursue this path for the sake of simplicity and convenience.

\section{Reductions to safety games}
\label{sec:reductions}
This section describes different ways to reduce solving a game $\game$ with a positionally determined objective $\Omega$ 
to solving a (larger) equivalent safety game, using a product construction with an automaton (or equivalently, a graph).
In this section we introduce the concepts needed to prove the following theorem.

\begin{thm}\label{thm:reductions_to_safety}
Let $n \in \N$, $\Omega$ a positionally determined objective, $\game$ a game of size $n$ with objective $\Omega$, 
and $\auto = (Q, \Delta)$ be either an $(n, \Omega)$-separating automaton (defined in Subsection~\ref{subsec:separating}) an $(n, \Omega)$-GFSG automaton (defined in Subsection~\ref{subsec:GFSG}) or an $(n, \Omega)$-universal graph (defined in Subsection~\ref{subsec:universal}). 
Then the winning region in $\game$ can be computed\footnote{In the RAM model with word size $w = \log(|V|) + \log(|Q|)$.} 
in time $O(|E| \cdot |\Delta|)$.
\end{thm}

In all three cases, this is obtained by reducing $\game$ to the safety game $\game \chain \auto$, and then applying Theorem~\ref{thm:safety_games}.
The three notions (separating automata, GFSG automata, and universal graphs) are sufficient conditions for the two games 
$\game$ and $\game \chain \auto$ to be equivalent.
We start by describing the product construction $\game \chain \auto$.

\subsection{Product construction}
\label{subsec:product}

Let $\auto = (Q, \Delta)$ be an automaton and $\game$ a game with objective $\Omega$.
We define the chained game $\game \chain \auto$ as the safety game with vertices $V' = \VE' \uplus \VA'$ and edges $E'$ given by
\[
\begin{array}{lll}
\VE' & = & \VE \times Q \uplus E\times Q, \\
\VA' & = & \VA \times Q, \\
E' & = & \{((v,q), ((v,c,v'),q)) : q \in Q, (v,c,v') \in E\} \\
& \cup & \{(((v,c,v'),q), (v',q')) : (v,c,v') \in E, (q,c,q') \in \Delta\}.
\end{array}
\]

In words, from $(v,q) \in V \times Q$, the player whom $v$ belongs to chooses an edge $e=(v,c,v') \in E$, and the game progresses to $(e,q)$.
Then Eve chooses a transition $(q,c,q')$ in $\auto$, and the game continues from $(v',q')$.
Intuitively, we simulate playing in $\game$, but additionally Eve has to simultaneously  follow a path in $\auto$ which reads the same colours.
Note that the obtained game $\game \chain \auto$ is a safety game: Eve wins if she can play forever or end up in a sink controlled by Adam.

\begin{lem}\label{lem:product_harder}
Let $\Omega$ be an objective, $\game$ a game with objective $\Omega$, $\auto$ an automaton, and $v_0 \in V$.
If $\auto$ satisfies $\Omega$ and $q_0 \in Q$ is such that Eve wins from $(v_0,q_0)$ in $\game \chain \auto$, then Eve wins from $v_0$ in $\game$.
\end{lem}

Intuitively, if Eve can play forever in $\game \chain \auto$, and all paths in $\auto$ are labelled by sequences which belong to $\Omega$, then Eve can use the same strategy in $\game$ and obtain a path satisfying $\Omega$.
The formal proof requires some technical care for transferring the strategy from $\game \chain \auto$ to $\game$.

\begin{proof}
Let $\sigma'$ be a winning strategy from $(v_0, q_0)$ in $\game \chain \auto$.
We construct a strategy $\sigma$, from $v_0$ in $\game$ which simulates $\sigma'$ in the following sense,
which we call the simulation property:
\begin{center}
for any path $\pi = v_0 c_0 \dots c_{i-1} v_i$ in $\game$ which is consistent with $\sigma$, \\
there exists a sequence of states $q_0 \dots q_i$ such that $\pi' = (v_0, q_0)(e_0,q_0) \dots (v_i, q_i)$, \\
with $e_j = (v_j,c_j,v_{j+1})$ for $j < i$, is a path in $\game \chain \auto$ consistent with $\sigma'$.
\end{center}

We define $\sigma$ over finite paths $v_0c_0 \dots c_{i-1}v_i$ with $v_i \in \VE$ by induction over $i$, so that the simulation property holds.
For $i=0$ there is nothing to prove since $\pi'=(v_0,q_0)$ is a path in $\game \chain \auto$ consistent with $\sigma'$.
Let $\pi = v_0 c_0 \dots c_{i-1} v_{i}$ be a path in $\game$ consistent with $\sigma$ and $v_i \in \VE$, 
we want to define $\sigma(\pi)$.
Thanks to the simulation property there exists $q_0 \dots q_i$ such that 
$\pi' = (v_0, q_0)(e_0, q_0) \dots (v_{i}, q_{i})$ is a path in $\game \chain \auto$ consistent with $\sigma'$.
Then $(v_i,q_i) \in \VE'$, so since it is winning, $\sigma'$ is defined over $\pi'$, let us write $\sigma'(\pi') = (e_i,q_i)$ with $e_i = (v_i,c_i,v_{i+1})$. 
We set $\sigma(\pi) = e_i$.
%This defines $\sigma$ over consistent paths of length at most $i$ ending in $\VE$.

To conclude the definition of $\sigma$ we need to show that the simulation property extends to paths
of length $i+1$.
Let $\pi_{i+1} = \pi\ c_i v_{i+1} = v_0 c_0 \dots v_i c_i v_{i+1}$ be consistent with $\sigma$.
We apply the simulation property to $\pi$ to construct $q_0 \dots q_i$ such that
$\pi' = (v_0, q_0)(e_0, q_0) \dots (v_{i}, q_{i})$ is a path in $\game \chain \auto$ consistent with $\sigma'$.
Let us consider $\pi'_{+1} = \pi'\ (e_i,q_i)$ with $e_i = (v_i,c_i,v_{i+1})$, we claim that $\pi'_{+1}$ is consistent with $\sigma'$.
Indeed, if $v_i \in \VA$, there is nothing to prove, and if $v_i \in \VE$, this holds by construction: 
since $\sigma(\pi) = e_i$ we have $\sigma'(\pi') = (e_i,q_i)$.
Now $(e_i,q_i) \in \VE$ so $\sigma'$ is defined over $\pi'_{+1}$, and we let $\sigma'(\pi'_{+1}) = (v_{i+1}, q_{i+1})$.
Then $\pi'_{+2} = \pi'_{+1} (v_{i+1}, q_{i+1})$ is consistent with $\sigma'$.
This concludes the inductive proof of the simulation property together with the definition of $\sigma$.

\vskip1em
We now prove that $\sigma$ is a winning strategy from $v_0$. 
Let $\pi=v_0 c_0 v_1 \dots$ be a maximal consistent path with $\sigma$, and let $\pi' = (v_0,q_0)(e_0,q_0)(v_1,q_1)(e_1,q_1) \dots$ be the corresponding path in $\game \chain \auto$ consistent with $\sigma'$.
Let us first assume that $\pi$ is finite, and let $v_i = \last(\pi)$. If $v_i \in \VE$, then by construction $\sigma$ is defined over $\pi$, so the path $v_0c_0 \dots v_i c_i v_{i+1}$, where $\sigma(\pi)=(v_i,c_i,v_{i+1})$ is consistent with $\sigma$, contradicting maximality of $\pi$. If however $v_i \in \VA$, then by maximality of $\sigma$, $v_i$ must be a sink, hence $\pi$ is winning.
%If $\pi$ is finite then it ends in a sink $v_i$, then so is $(v_i,q_i)$, hence $\pi'$ is finite and maximal.
%Since $\sigma'$ is winning, $\pi'$ is winning, hence $(v_i,q_i)$ belongs to Adam in $\game \chain \auto$, so $v_i$ belongs to Adam in $\game$ hence $\pi$ is winning.
Now if $\pi$ is infinite then so is $\pi'$, and then $q_0 c_0 q_1 c_1 \dots$ is a path in $\auto$, so $\col(\pi) = c_0 c_1 \dots \in L(\auto) \subseteq \Omega$, and $\pi$ is winning.
We conclude that $\sigma$ is a winning strategy from $v_0$ in $\game$.
\end{proof}

Hence, assuming that $\auto$ satisfies $\Omega$, winning $\game$ is always easier for Eve than winning $\game \chain \auto$. The three following sections introduce three different settings where the converse holds, that is, $\game$ and $\game \chain \auto$ are equivalent. In particular, via Theorem~\ref{thm:safety_games}, this leads to algorithms with runtime $O(|E'|)$ 
which is bounded by $O(|E| \cdot |\Delta|)$ for computing the winning region in $\game$.
%as stated in the following lemma.
%
%\begin{lem}
%\label{thm:solving_product_game_naive}
%Let $n \in \N$, $\Omega$ an objective, $\game$ a game of size $n$ with objective $\Omega$, 
%and $\auto = (Q, \Delta)$ an automaton.
%Then the winning region in $\game \chain \auto$ can be computed in time $O(|E| \cdot |\Delta|)$.
%\end{lem}

\subsection{Separating automata}
\label{subsec:separating}
It is not hard to see that if $\auto$ is deterministic and recognises exactly $\Omega$, then $\game$ and $\game \chain \auto$ are equivalent.
However, for many objectives $\Omega$ (for instance, parity or mean payoff objectives), a simple topological argument shows that such deterministic automata do not exist.
Separating automata are defined by introducing $n$ as a parameter, and relaxing the condition $L(\auto)=\Omega$ to the weaker condition $\Omega^{\mid n} \subseteq L(\auto) \subseteq \Omega$, where $\Omega^{\mid n}$ is the set of infinite sequence of colours that label paths from graphs of size at most $n$ satisfying $\Omega$.
Formally,
\[
\Omega^{\mid n} = \{\col(\pi) \mid \pi \in \IPath(G), G \text{ has size at most $n$ and satisfies $\Omega$}\}.
\]

\begin{defi}
An $(n, \Omega)$-separating automaton $\auto$ is a deterministic automaton such that
\[
\Omega^{\mid n} \subseteq L(\auto) \subseteq \Omega.
\]
\end{defi}

The definition given here slightly differs from the original one given by Boja{\'n}czyk and Czerwi{\'n}ski~\cite{BC18},
who use a different relaxation $\Omega_n$ satisfying $\Omega^{\mid n} \subseteq \Omega_n \subseteq \Omega$.

\begin{thm}
\label{thm:separating_automata}
Let $\Omega$ be a positionally determined objective, $\auto$ an $(n,\Omega)$-separating automaton, $\game$ a game of size $n$ with objective $\Omega$, and $v_0 \in V$. 
Then Eve wins from $v_0$ in $\game$ if and only if she wins from $(v_0,q_0)$ in $\game \chain \auto$.
\end{thm}

We postpone the proof of Theorem~\ref{thm:separating_automata} to the next subsection which discusses the more general setting of GFSG automata.

\subsection{Good-for-small-games Automata}
\label{subsec:GFSG}
Good-for-small-games (GFSG) automata extend separating automata by introducing some (controlled) non-determinism.
The main motivation for doing so is that non-deterministic automata are more succinct than deterministic ones,
potentially leading to more efficient algorithms. 
The notion of good-for-small-games automata we introduce here is inspired by~\cite{HenzingerPiterman06}, and more precisely the variant called history-deterministic automata used in the context of the theory of regular cost functions (\cite{Colcombet09}, see also~\cite{ColcombetF16}). 
We refer to Subsection~\ref{subsec:complexity_parity} for a discussion on the use of this model for solving parity games.
%As discussed in Subsection~\ref{subsec:complexity_parity}, this is necessary to capture Lehtinen's algorithm~\cite{Leh18,LehtinenB20}.

In a good-for-games automaton non-determinism can be resolved on the fly, only by considering the word read so far.
For GFSG automata, this property is only required for words in $\Omega^{\mid n}$.

\begin{defi}[Good-for-small-games automata]
An automaton $\auto$ is $(n,\Omega)$-GFSG if it satisfies $\Omega$, and moreover, there exists a state $q_0$ and a partial map $\sigma_\auto : C^+ \to Q$ such that for all $c_0 c_1 \dots \in \Omega^{\mid n}$, 
$q_0 c_0 q_1 \dots$ defines an infinite path in $\auto$ with $q_i = \sigma_\auto(c_0 \dots c_{i-1})$ for all $i \geq 1$.
\end{defi}

Note that $\sigma_\auto$ is defined over all prefixes of $\Omega^{\mid n}$.
By a slight abuse, we refer to $\sigma_\auto$ as the strategy of the $(n, \Omega)$-GFSG automaton $\auto$.
This is justified by the fact that $\sigma_\auto$ can formally be seen as a strategy in a game where 
Adam inputs colours and Eve builds a run in $\auto$, and 
Eve wins if whenever the sequence of colours is in $\Omega^{\mid n}$ then the play is infinite.

If $\auto$ is a separating automaton with initial state $q_0$, the partial map $w \mapsto \delta^*(q_0,w)$ is a strategy which indeed makes it GFSG.\@ 
In this sense, the following theorem generalises Theorem~\ref{thm:separating_automata}.

\begin{thm}
\label{thm:GFSG_automata}
Let $\Omega$ be a positionally determined objective, $\auto$ an $(n, \Omega)$-GFSG automaton, $\game$ a game of size $n$ with objective $\Omega$ and $v_0 \in V$. 
Then Eve has a winning strategy in $\game$ from $v_0$
if and only if she has a winning strategy in $\game \chain \auto$ from $(v_0,q_0)$.
\end{thm}

\begin{proof}
The ``if'' follows directly from Lemma~\ref{lem:product_harder}. 
Conversely, we assume that Eve wins $\game$ from $v_0$, and let $\sigma$ be a winning strategy from $v_0$.
Since $\Omega$ is positionally determined, we can choose $\sigma$ to be positional.
We define a strategy $\sigma'$ over finite paths $\pi' = (v_0,q_0)(e_0,q_0) \dots$ with $e_i = (v_i,c_i,v_{i+1})$ for all $i$
and $\last(\pi') \in \VE'$ as follows:
\[
\begin{array}{llll}
\sigma'(\pi') & = & (\sigma(v_i), q_i) & \text{ if } \last(\pi') = (v_i,q_i) \in \VE \times Q, \\
\sigma'(\pi') & = & (v_{i+1}, \sigma_\auto(c_0 \dots c_i))) & \text{ if } \last(\pi') = (e_i,q_i) \in E \times Q.
\end{array}
\]
Intuitively, we simply play as prescribed by $\sigma$ on the first coordinate, and as prescribed by $\sigma_\auto$ on the second.
%Note that $\sigma'$ is not defined everywhere, since $\sigma_\auto$ is only defined for prefixes on $\Omega^{\mid n}$.
We first show the following simulation property: 

\begin{center}
Let $\pi' = (v_0,q_0)(e_0,q_0) \dots (v_i,q_i)(e_i,q_i)$ \\
with $e_j = (v_j,c_j,v_{j+1})$ for all $j \le i$ be consistent with $\sigma'$, \\
then $\pi = v_0 c_0 \dots v_i c_{i+1} v_{i+1}$ is consistent with $\sigma$.
\end{center}

We proceed by induction over $i$, treating together the base case $i=0$ and the inductive case.
Let us consider $\pi'$ and $\pi$ as in the simulation property.
In both the base and inductive cases, $\pi_{\le i} = v_0 c_0 \dots c_{i-1} v_i$ is consistent with $\sigma$, either trivially or by inductive hypothesis.
There are two cases: either $v_i \in \VA$, and then since $\pi_{\le i}$ is consistent with $\sigma$ 
it follows that $\pi$ is consistent with $\sigma$,
or $v_i \in \VE$, and then since $\pi'$ is consistent with $\sigma'$ and by the first item of the definition of $\sigma'$ we have 
$\sigma(v_i) = e_i$, again $\pi$ is consistent with $\sigma$.
This concludes the inductive proof of the simulation property.

\vskip1em
We now show that $\sigma'$ is a winning strategy in $\game \chain \auto$ from $(v_0,q_0)$.
Consider $\pi'$ a maximal consistent path in $\game \chain \auto$ with $\sigma'$.
Since $\sigma'$ is defined over consistent paths ending in $\VE$, if, for contradiction, $\pi'$ is not winning in the safety game $\game \chain \auto$, then it is finite and ends in a sink controlled by Eve.
There are two cases: either $\last(\pi') = (v_i,q_i) \in \VE \times Q$, or $\last(\pi') = (e_i,q_i) \in E \times Q$.

In the first case, the simulation property applied to 
$\pi' = (v_0,q_0)(e_0,q_0) \dots (e_{i-1},q_{i-1})$ 
implies that $\pi = v_0 c_0 \dots c_i v_i$ is consistent with $\sigma$.
Since $(v_i,q_i)$ is a sink this implies that $v_i$ is a sink (controlled by Eve), contradicting that $\sigma$ is a winning strategy from $v_0$.

In the second case the simulation property applied to
 $\pi' = (v_0,q_0)(e_0,q_0) \dots (v_i,q_i)(e_i,q_i)$
implies that $\pi = v_0 c_0 \dots v_i c_i v_{i+1}$ is consistent with $\sigma$.
In other words $\pi$ is a path in $\game[\sigma, v_0]$.
Since $\game[\sigma, v_0]$ is a graph satisfying $\Omega$ of size at most $n$, 
we have that $\col(\pi) = c_0 \dots c_i$ is a prefix of a word in $\Omega^{\mid n} \subseteq L(\auto)$.
This implies that $\sigma_\auto(c_0 \dots c_i)$ is well defined, hence so is $\sigma'(\pi')$,
thus $\last(\pi')$ cannot be a sink.
\end{proof}

\subsection{Universal graphs}
\label{subsec:universal}

In comparison with separating and GFSG automata, the notion of universal graphs is more combinatorial than automata-theoretic, hence the name.
However, recall that in our formalism, non-deterministic automata and graphs are syntactically the same objects.

\paragraph*{Graph homomorphisms}
For two graphs $G,G'$, a homomorphism $\phi : G \to G'$ maps the vertices of $G$ to the vertices of $G'$ such that 
\[
(v,c,v') \in E \ \implies\ (\phi(v),c,\phi(v')) \in E'.
\]

As a simple example, note that if $G$ is a subgraph of $G'$, then the identity is a homomorphism $G \to G'$.
We say that $G$ maps into $G'$ if there exists a homomorphism $G \to G'$.

\begin{fact}
If $G$ maps into $G'$, then $L(G) \subseteq L(G')$.
\end{fact}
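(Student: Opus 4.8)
The plan is to prove that if there is a homomorphism $\phi : G \to G'$, then every word accepted by $G$ (that is, labelling an infinite path in $G$) is also accepted by $G'$. The natural approach is to take an arbitrary infinite path in $G$, push it through $\phi$ vertex by vertex, and check that the image is a legitimate infinite path in $G'$ with the same sequence of colours.

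**First I would** fix a word $w \in L(G)$, so by definition there is an infinite path $\pi = v_0 c_0 v_1 c_1 v_2 \dots$ in $G$ with $\col(\pi) = w = c_0 c_1 \dots$. Then I would define the candidate path $\pi' = \phi(v_0)\, c_0\, \phi(v_1)\, c_1\, \phi(v_2) \dots$ in $G'$, keeping the colours unchanged and replacing each vertex $v_i$ by $\phi(v_i)$.

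**The key step** is to verify that $\pi'$ is indeed a path in $G'$: for every $i$ we need $(\phi(v_i), c_i, \phi(v_{i+1})) \in E'$. But this is exactly the defining property of a graph homomorphism applied to the edge $(v_i, c_i, v_{i+1}) \in E$, which holds because $\pi$ is a path in $G$. Hence $\pi'$ is an infinite path in $G'$, and $\col(\pi') = c_0 c_1 \dots = w$, so $w \in L(G')$. Since $w$ was arbitrary, $L(G) \subseteq L(G')$.

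**There is essentially no obstacle here**: this is a routine unwinding of the definitions of homomorphism, path, and $L(\cdot)$, and the only thing to be careful about is that the colours are transported unchanged (which is built into our notion of homomorphism, as it only constrains edges and preserves their labels). I do not even need determinism or any assumption on the objective $\Omega$; the statement is purely about the languages of the underlying graphs. So the proof is a one-paragraph verification along the lines just sketched.
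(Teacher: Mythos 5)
Your proof is correct and is the obvious unwinding of the definitions; the paper states this fact without proof precisely because this is the argument the authors have in mind. Nothing to change.
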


\begin{defi}[Universal graphs]
A graph $\auto$ is $(n,\Omega)$-universal if it satisfies $\Omega$ and all graphs of size at most $n$ satisfying $\Omega$ map into $\auto$.
\end{defi}

\begin{thm}
Let $\Omega$ be a positionally determined objective, $\auto$ an $(n,\Omega)$-universal graph, $\game$ a game of size $n$ with objective $\Omega$, and $v_0 \in V$.
Then Eve has a winning strategy in $\game$ from $v_0$ if and only if she has a winning strategy in $\game \chain \auto$ from $(v_0,q_0)$ 
for some $q_0 \in Q$.
\end{thm}

\begin{proof}
Again, the ``if'' is a direct application of Lemma~\ref{lem:product_harder}.
Conversely, assume that Eve wins $\game$ from $v_0$, and let $\sigma$ be a winning strategy from $v_0$.
Since $\Omega$ is positionally determined, we can choose $\sigma$ to be positional.
Consider the graph $\game[\sigma,v_0]$: it has size at most $n$ and since $\sigma$ is winning from $v_0$ it satisfies $\Omega$.
Because $\auto$ is an $(n,\Omega)$-universal graph there exists a homomorphism $\phi$ from $\game[\sigma,v_0]$ to~$\auto$. 
We construct a positional strategy $\sigma'$ in $\game \chain \auto$ 
by playing as in $\sigma$ in the first component and following the homomorphism $\phi$ on the second component. Formally,
\[
\begin{array}{llll}
\sigma'(v,q) &=& (\sigma(v),q) & \text{ for } v \in \VE \text{ and } q \in Q, \\
\sigma'(e,q) &=& (v', \phi(v')) & \text{ for } e=(v,c,v') \in E
 \text{ and } q \in Q.
\end{array}
\]
Let $q_0 = \phi(v_0)$. We first obtain the following simulation property by a straightforward induction: 
\begin{center}
Let $\pi' = (v_0,q_0)(e_0,q_0) \dots (v_i,q_i)(e_i,q_i)$ with $e_j = (v_j,c_j,v_{j+1})$ for all $j \le i$ be consistent with $\sigma'$, then \\
$\pi = v_0 c_0 \dots v_i c_{i+1} v_{i+1}$ is consistent with $\sigma$ and for all $j \le i$ we have 
$v_i \in V[\sigma, v_0], e_i \in E[\sigma, v_0]$, and $q_i = \phi(v_i)$.
\end{center}

We now show that $\sigma'$ is a winning strategy from $(v_0,q_0)$ in $\game \chain \auto$.
Let $\pi'$ be a maximal consistent path with $\sigma'$, which we assume for contradiction to be losing, hence finite and ending in a sink controlled by Eve.
There are two cases: either $\last(\pi') = (v_i,q_i) \in \VE \times Q$, or $\last(\pi') = (e_i,q_i) \in E \times Q$.
In the first case, thanks to the simulation property this implies that $v_i$ is a sink, which contradicts that $\sigma$ is winning.
In the second case, let us note that since $\phi$ is a homomorphism and $e_i = (v_i,c_i,v_{i+1}) \in E[\sigma, v_0]$, 
we have $(\phi(v_i),c_i,\phi(v_{i+1})) \in \Delta$, 
so $\sigma'(e_i,q_i)$ is an edge from $(e_i,q_i)$ in $\game \chain \auto$,
hence $(e_i,q_i)$ is not a sink, a contradiction.
\end{proof}

\section{Equivalence via saturation}
\label{sec:saturation}
In Section~\ref{sec:reductions} we have defined three sufficient conditions on $\auto$ for the equivalence
between the games $\game \chain \auto$ and $\game$: separating automata, GFSG automata, and universal graphs.
What they have in common is that they all satisfy
\[
\Omega^{\mid n} \subseteq L(\auto) \subseteq \Omega,
\]
although this condition does not ensure the equivalence between $\game \chain \auto$ and $\game$.
In this section, we prove the following equivalence result:

\begin{thm}[Normalisation and equivalence between all three sufficient conditions]
\label{thm:main}
Let $\Omega$ be a positionally determined and prefix independent objective with a neutral letter, and let $n \in \N$. 
The smallest size of an $(n, \Omega)$-separating automaton, of a $(n,\Omega)$-GFSG automaton, of a $(n,\Omega)$-universal graph, and, more generally, of a non-deterministic automaton $\auto$ such that 
$\Omega^{|n} \subseteq L(\auto) \subseteq \Omega$, 
all coincide.

Moreover, there exists a graph $G$, whose size matches this minimal size, which is saturated and linear
(see Subsection~\ref{subsec:saturated_graphs_and_linear_graphs}), $(n,\Omega)$-universal, $(n,\Omega)$-GFSG, and such that removing edges from $G$ produces an $(n,\Omega)$-separating automaton. 
\end{thm}

%We prove this equivalence using a saturation technique which given a non-deterministic automaton $\auto$ such that 
%$\Omega^{|n} \subseteq L(\auto) \subseteq \Omega$ constructs a smaller automaton which is at the same time a $(n,\Omega)$-GFSG automaton, a $(n,\Omega)$-universal graph, and is essentially deterministic.

We will study graphs which are saturated, that is, maximal with respect to edge inclusion for the property of satisfying $\Omega$ (see next subsection for formal definition).
We will show that these graphs have a very constrained structure captured by the notion of linear graphs, which in particular can be seen as deterministic automata.

Beyond proving the above theorem, this strong structural result will allow us to obtain generic value iteration algorithms (Section~\ref{sec:value_iteration}), and moreover yields a generic normalisation procedure (saturation), which will be applied to parity and mean payoff objectives (Sections~\ref{sec:parity} and~\ref{sec:mean_payoff}) to prove strong upper and lower bounds on the size of universal graphs.

\subsection{Saturated graphs and linear graphs}
\label{subsec:saturated_graphs_and_linear_graphs}

We introduce the two most important concepts of Section~\ref{sec:saturation}: saturated graphs and linear graphs.
Given a graph $G$ and $e \notin E$, we let $G_e$ denote the graph $(V,E \cup \set{e})$.

\begin{defi}[Saturated graphs]
A graph $G$ is saturated with respect to $\Omega$ if $G$ satisfies $\Omega$ and for all $e \notin E$ we have that 
$G_e$ does not satisfy $\Omega$.
\end{defi}

%In other words, a saturated graph is maximal with respect to edge inclusion for the property of satisfying $\Omega$.
For a graph $G$ satisfying $\Omega$, we say that $\widehat{G}$ is a saturation of $G$ if 
$\widehat{G}$ is saturated, uses the same set of vertices as $G$, and all edges of $G$ are in $\widehat{G}$.

%From any graph satisfying $\Omega$, one may arbitrarily add edges until the obtained graph is saturated.
\begin{lem}
\label{lem:saturation}
Let $G$ be a graph satisfying $\Omega$. 
Then there exists a saturation $\widehat{G}$ of $G$, and the identity is a homomorphism from $G$ to $\widehat{G}$.
\end{lem}

\begin{proof}
There are finitely many graphs over the set of vertices of $G$ which include all edges of $G$, so there exists a maximal one with respect to edge inclusion for the property of satisfying $\Omega$. 
\end{proof}

The following two facts will be very useful for reasoning, they essentially say that in the context of universal graphs, 
we can restrict our attention to saturated graphs without loss of generality.

\begin{fact}
Let $n \in \N$, $\Omega$ an objective, and $\U$ a graph.
\begin{itemize}
	\item If $\U$ is an $(n,\Omega)$-universal graph, then any saturation of $\U$ is $(n,\Omega)$-universal (and it has the same size as $\U$),
	\item the graph $\U$ is $(n,\Omega)$-universal if and only if 
	it satisfies $\Omega$ and all saturated graphs $G$ of size $n$ map into $\U$. 
\end{itemize}
\end{fact}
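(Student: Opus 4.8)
The plan is to establish the two bullet points of the Fact separately, in both cases exploiting Lemma~\ref{lem:saturation} together with the elementary observation that homomorphisms compose and that the identity map is a homomorphism from a subgraph to its supergraph.

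For the first bullet, let $\U$ be $(n,\Omega)$-universal and let $\widehat\U$ be any saturation of $\U$, which exists by Lemma~\ref{lem:saturation}. By definition a saturation has the same vertex set as $\U$, so $\widehat\U$ has the same size as $\U$; and a saturated graph satisfies $\Omega$ by definition, so it remains to check the universality condition. Let $G$ be any graph of size at most $n$ satisfying $\Omega$. Since $\U$ is $(n,\Omega)$-universal there is a homomorphism $\phi : G \to \U$. By Lemma~\ref{lem:saturation} the identity $\mathrm{id} : \U \to \widehat\U$ is a homomorphism (every edge of $\U$ is an edge of $\widehat\U$), and composing gives a homomorphism $\mathrm{id} \circ \phi : G \to \widehat\U$. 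Hence every graph of size at most $n$ satisfying $\Omega$ maps into $\widehat\U$, so $\widehat\U$ is $(n,\Omega)$-universal.

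For the second bullet, the forward direction is immediate: if $\U$ is $(n,\Omega)$-universal then it satisfies $\Omega$ by definition, and in particular every saturated graph of size $n$ satisfying $\Omega$ (a fortiori of size at most $n$) maps into $\U$. For the converse, suppose $\U$ satisfies $\Omega$ and every saturated graph of size $n$ maps into $\U$; we must show that an arbitrary graph $G$ of size at most $n$ satisfying $\Omega$ maps into $\U$. First I would pad $G$ up to exactly $n$ vertices if necessary by adding isolated vertices (these create no new infinite paths, so the padded graph still satisfies $\Omega$, and a homomorphism from the padded graph restricts to one from $G$) — alternatively one can simply allow "of size $n$" to mean "of size at most $n$" as is consistent with the rest of the paper. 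Then apply Lemma~\ref{lem:saturation} to obtain a saturation $\widehat G$ of $G$, which is saturated, has the same vertex set (hence size $n$), satisfies $\Omega$, and receives the identity homomorphism $\mathrm{id} : G \to \widehat G$. By hypothesis there is a homomorphism $\psi : \widehat G \to \U$, and composing yields $\psi \circ \mathrm{id} : G \to \U$, as required.

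The only genuine subtlety — and the step I would be most careful about — is the size bookkeeping in the converse of the second bullet: the universality definition quantifies over graphs of size \emph{at most} $n$, whereas the hypothesis only mentions saturated graphs of size \emph{exactly} $n$, so one needs the padding-by-isolated-vertices argument (or an explicit convention) to close that gap. Everything else is a routine composition of homomorphisms, using that saturation preserves the vertex set and that the satisfaction of a prefix-independent (indeed arbitrary) objective $\Omega$ is unaffected by adding isolated vertices.
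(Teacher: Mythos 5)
Your proof is correct and follows the same underlying idea the paper gestures at in its one-line justification, namely that homomorphisms compose and that saturation leaves the vertex set fixed while preserving satisfaction of $\Omega$. The paper simply remarks that both items are consequences of the observation that if a saturation $\widehat G$ of $G$ maps into $\U$ then $G$ does too (via $G \hookrightarrow \widehat G \to \U$); you spell this out and also give the dual composition $G \to \U \hookrightarrow \widehat\U$ needed for the first item, which the paper leaves implicit. Your remark about the gap between ``size $n$'' and ``size at most $n$'' in the second item is a legitimate pedantic point that the paper's phrasing glosses over; the padding-by-isolated-vertices argument you give is a clean way to close it, and it is indeed safe since isolated vertices create no new infinite paths and hence do not affect satisfaction of $\Omega$.
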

Both are consequences of the following observation: if a saturation $\widehat{G}$ of $G$ maps into $\U$,
then $G$ maps into $\U$ by composing the homomorphisms from $G$ to $\widehat{G}$ and from $\widehat{G}$ to $\U$.

We will see that saturated graphs satisfy structural properties captured by the following definition.

\begin{defi}[Linear graphs]
A graph $G$ is linear if there exists a total order $\le$ on the vertices of $G$ satisfying the following two properties:
\begin{eqnarray*}
\text{ if } v \ge v' \text{ and } (v',c, v'') \in E, \text{ then } (v, c, v'') \in E, \\
\text{ if } (v ,c , v') \in E \text{ and } v' \ge v'', \text{ then } (v, c, v'') \in E.
\end{eqnarray*}
We refer to the first property as left composition and to the second as right composition.
\end{defi}

Given a linear graph $G=(V,E)$, a vertex $v \in V$ and a colour $c \in C$, one may define the set of $c$-successors of $v$, $\{v' \in V \mid (v,c,v') \in E\}$, which by right composition is downward closed with respect to the order on $G$. 
Hence it is uniquely defined by its maximal element, which we refer to as $\delta(v,c) \in V \cup \{\bot\}$.
Likewise, the set of $c$-predecessors of $v' \in V$, given by $\{v \in V \mid (v,c,v') \in E\}$ is upwards closed by left composition. We let $\rho(v',c) \in V \cup \{\top\}$ denote the minimal $c$-predecessor of $v'$.

For a fixed $c \in C$, the functions $v \mapsto \delta(v,c)$ and $v' \mapsto \rho(v',c)$ are non-decreasing, respectively thanks to left and right composition. If it holds for some $v,v'$ that for all colours $c$, we have $\delta(v,c)=\delta(v',c)$, and $\rho(v,c) = \rho(v',c)$, then $v, v'$, and any vertex $v''$ between $v$ and $v'$ all have exactly the same incoming and outgoing edges.

%This motivates the following definition.
%
%\begin{defi}[Reduced graphs]
%A linear graph $G$ is reduced if for all $u,v \in V$, if for all $c\in C$ we have $\delta(u,c)=\delta(v,c)$ and $\rho(u,c)=\rho(v,c)$, then $u=v$.
%\end{defi}
%
%The following lemma is obtained simply by contracting (convex) equivalence classes of the relation $v \sim v' \iff \forall c, \delta(v,c)=\delta(v',c)$ and $\rho(v,c)=\rho(v,c')$.
%
%\begin{lem}\label{lem:reduced}
%Let $G$ be a linear graph. There exists a reduced linear graph $R$ no larger than $G$ such that $G$ and $R$ map into one another.
%\end{lem}

%In several applications (see Sections REF), reduced linear graphs, which are no less general in the sense of the previous lemma, arise more naturally.

The operator $\delta(\cdot,c)$ also allows us to determinise linear graphs while preserving the state space.

\begin{defi}[Determinisation of linear graphs]
Let $G=(V,E)$ be a linear graph with respect to $\leq$. 
We define its determinisation $\Det(G)$ by $\Det(G) = (V, \delta, v_0)$ with $v_0$ the maximum element of $V$ 
and where $\delta(v,c)$ is defined as the largest $c$-successor of $v$ if there is any (and undefined otherwise).
\end{defi}

\begin{lem}\label{lem:determinisation}
Let $G=(V,E)$ be a linear graph. Then $G$ and $\Det(G)$ recognise the same language.
\end{lem}

\begin{proof}
It is clear that $L(\Det(G)) \subseteq L(G)$. Conversely, let $\pi = v_0 c_0 v_1 c_1 \dots$ be a path in $G$. 
We prove by induction that for all $i$, $\delta^*(v_0,c_0 \dots c_i)$ is well defined and greater than or equal to $v_i$. 
This is clear for $i = 0$, so let us consider $i > 0$. 
Since $(v_i,c_i,v_{i+1}) \in E$ and by induction hypothesis $\delta^*(v_0,c_0 \dots c_i) \ge v_i$,
we have by left composition that $(\delta^*(v_0,c_0 \dots c_i), c_i, v_{i+1}) \in E$. 
Hence $\delta^*(v_0,c_0 \dots c_{i+1})$ is well defined and greater than or equal to $v_{i+1}$, which concludes the inductive proof. 
This implies that $c_0 c_1 \dots \in L(\Det(G))$.
\end{proof}

\subsection{Normalisation theorem}
\label{subsec:saturated_graphs_are_linear_graphs}

We now prove our normalisation result.

\begin{thm}
\label{thm:saturated_graphs_are_linear}
Let $\Omega$ be a prefix independent and positionally determined objective with a neutral letter $0$. Then any saturated graph with respect to $\Omega$ is linear.
\end{thm}

This theorem is the only place in the paper where we use the neutral letter.

%This motivates the following definition.
%
%\begin{defi}
%We say that a graph $G$ is normalised if it is saturated, hence linear, and reduced.
%\end{defi}
%
%We will see that moreover, in a reduced graph, the total order which makes it linear is given by $0$ edges.

\begin{proof}
Let $G=(V,E)$ be a saturated graph, and consider the relation $\leq$ over $V$ given by $v \geq v' \iff (v,0,v') \in E$. We now prove that $\leq$ is a total preorder satisfying left and right composition.
\begin{itemize}
	\item \textbf{Reflexivity}. Let $v \in V$. We argue that $(v,0,v) \in E$. 
	Indeed if this were not the case then adding it to $G$ would create a graph satisfying $\Omega$ by definition of $\Omega$ and neutrality of $0$, which contradicts the fact that $G$ is saturated.
	
	\item \textbf{Totality}. Assume for contradiction that neither $(v,0,v') \in E$ nor $(v',0,v) \in E$.

	We construct a game $\game$ with objective $\Omega$. See Figure~\ref{fig:question_mark_game}.
	
	\begin{figure}
	\includegraphics[width= 0.7 \linewidth]{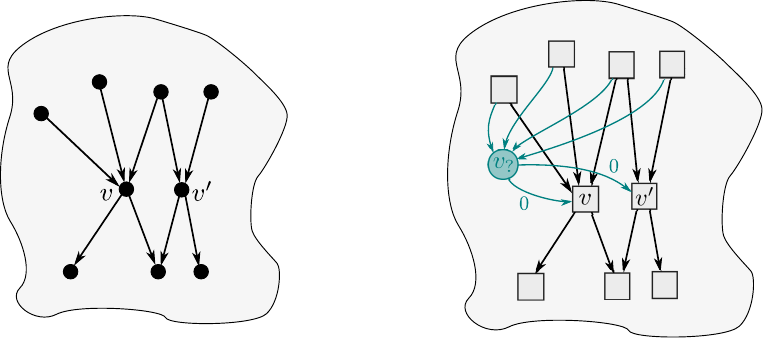}
	\caption{On the left the graph $G$, and on the right the game $\game$. The circle depicts the unique vertex which belongs to Eve, while squares belong to Adam.}
	\label{fig:question_mark_game}
	\end{figure}

	The underlying graph is obtained from $G$ with the following modification: 
	the set of vertices is $V$ plus a new vertex $v_?$.
	All vertices are controlled by Adam except for the new vertex $v_?$.
	For every edge leading to either $v$ or $v'$, we add a new edge going to $v_?$,
	where Eve chooses between going to $v$ or to $v'$ with colour $0$.
	Formally, the sets of vertices and edges are
	\[
	\begin{array}{lll}
	\VE & = & \{v_?\} \\
	\VA & = & V \\
	E' & = & E \cup \{(u,c,v_?) : (u,c,v) \in E \text{ or } (u,c,v') \in E\} \cup  \{(v_?,0,v), (v_?,0,v')\}.
	\end{array}
	\]
	Eve has a winning strategy from any vertex $v_0 \in V$: each time the play reaches $v_?$ from an edge $(u,c,v_?)$ such that 
	$(u,c,v) \in E$, Eve chooses to continue with $v$, and otherwise with $v'$. 
	Any path consistent with this strategy corresponds to a path in $G$,
	and since $G$ satisfies $\Omega$ this strategy is winning.
	We note that this strategy is not positional as it chooses $v$ or $v'$ depending on the incoming edge.
	
	We claim that Eve does not have a positional winning strategy in $\game$, 
	which contradicts the positional determinacy of $\Omega$.
	Indeed, if the positional strategy defined by $\sigma(v_?) = v$ was winning, 
	since any path in $G_{(v',0,v)}$ is consistent with $\sigma$ in $\game$, 
	this would imply that $G_{(v',0,v)}$ satisfies $\Omega$, contradicting the maximality of $G$.
	The same argument leads to a contradiction if $\sigma(v_?) = v'$ is a winning strategy.
	
	\item \textbf{Generalised transitivity}. 
	Note that left and right composition both generalise transitivity of $\leq$, that is, $(v,0,v') \in E$ (a) and $(v',0,v'') \in E$ (a) implies $(v,0,v'') \in E$ (c), by replacing, respectively in (b) or in (a), and in the conclusion (c), the colour 0 with any colour $c$.
	We prove left composition.
	Assume that $(v,0,v') \in E$ and $(v',c,v'') \in E$.
	We argue by contradiction that $(v,c,v'') \in E$. 
	Assuming otherwise, consider a path $\pi$ in $G_{(v,c,v'')}$. 
	Then $\pi'$ obtained from $\pi$ by replacing all occurrences of $v c v''$ by $v 0 v' c v''$ is a path in $G$ 
	so it must satisfy $\Omega$ and hence $\pi$ as well. 
	This contradicts the maximality of $G$.
	The proof of right composition follows the same lines.
\end{itemize}

Now to obtain a total order it suffices to refine $\leq$ arbitrarily over its equivalence classes, which preserves left and right compositions. Indeed, two vertices are equivalent for $\leq$ (before refining), that is, are interconnected with $0$-edges, if and only if they have exactly the same $c$-successors and $c$-predecessors for all colours $c$.
%Indeed, the finer is the relation (smaller in terms of set inclusion), the weaker is the statement.
\end{proof}

\subsection{Ubiquity of linear graphs}
\label{subsec:linear_graphs_are_everywhere}

We now relate linear graphs to the models of Section~\ref{sec:reductions}.

\begin{lem}
\label{lem:linear_graphs_are_everywhere}
Let $\auto$ be a linear graph such that $\Omega^{\mid n} \subseteq L(\auto) \subseteq \Omega$. Then
\begin{itemize}
	\item $\Det(\auto)$ is an $(n,\Omega)$-separating automaton,
	\item $\auto$ is an $(n, \Omega)$-GFSG automaton, and
	\item $\auto$ is an $(n, \Omega)$-universal graph.
\end{itemize}
\end{lem}

\begin{proof}
By Lemma~\ref{lem:determinisation}, $L(\Det(\auto)) = L(\auto)$, so the first item is clear.
The second item follows by using the strategy induced by the determinised automaton, that is, 
$\sigma_{G}(c_0 \dots c_i) = \delta^*(q_0, c_0 \dots c_i)$.

We now focus on the third item.
Let $G=(V,E)$ be a graph satisfying $\Omega$ and of size at most $n$, and let $v \in V$.

Let $\pi = v_0 c_0 \dots c_{i} v$ be a path ending in $v$ in $G$. 
Then $\col(\pi) = c_0 \dots c_{i}$ is a prefix of a word in $\Omega^{\mid n}$, so $\delta^*(v_0, \col(\pi))$ is well defined.
We now let $\phi : V \to Q$ be defined by
\[
\phi(v) = \min \set{\delta^*(q_0, \col(\pi)) : \pi \text{ is a path in $G$ ending in $v$}}.
\]
Note that there is always a path ending in $v$ since $v$ itself is such a path.

We now prove that $\phi$ defines a graph homomorphism.
Let $(v,c,v') \in E$, and let $\pi_v$ be a path in $G$ ending in $v$ such that 
$\phi(v) = \delta^*(q_0, \col(\pi_v))$. 
Then $\pi_v c v'$ is a path in $G$ ending in $v'$ with colour $\col(\pi_v) c$, so we have 
$\phi(v') \leq \delta^*(q_0, \col(\pi_v) c) = \delta(\phi(v), c)$, and by left composition 
$(\phi(v), c, \phi(v')) \in Q$.
This shows that $\phi$ is a homomorphism from $G$ to $\auto$, so $\auto$ is indeed $(n,\Omega)$-universal.
\end{proof}

We now obtain Theorem~\ref{thm:main} as a direct consequence.

\begin{proof}
Let $\auto$ be a non-deterministic automaton such that $\Omega^{\mid n} \subseteq L(\auto) \subseteq \Omega$. Note that this includes separating automata, GFSG automata and universal graphs. Saturate $\auto$ (Lemma~\ref{lem:saturation}) to obtain (Theorem~\ref{thm:saturated_graphs_are_linear}) a linear graph. Lemma~\ref{lem:linear_graphs_are_everywhere} concludes.
\end{proof}

%
%The study of saturated and linear graphs has two more applications:
%\begin{itemize}
%	\item In the next section, we will show that using linear graphs we construct more efficient algorithms than the general reduction given in Section~\ref{sec:reductions}.
%	\item When considering a specific objective $\Omega$, the structure of saturated and linear graphs will be very useful for proving upper and lower bounds on their sizes. We refer to Section~\ref{sec:parity} and Section~\ref{sec:mean_payoff}.
%\end{itemize}

\section{Solving the product game by value iteration}
\label{sec:value_iteration}
Section~\ref{sec:saturation} shows that without increasing its size, $\auto$ can be chosen to be a linear graph. In this section, we are interested in the complexity of solving the safety game $\game \chain \auto$ under this assumption.
Fix a game $\game$ over the graph $G = (V,E)$ and a linear graph $\auto = (Q,\Delta)$.

We first give a simplification of $\game \chain \auto$ in Subsection~\ref{subsec:simplification_linear} which reduces the size of the arena to $|V| \cdot |Q| + 1$ vertices and $|E| \cdot |Q|$ edges.
Applying Theorem~\ref{thm:safety_games} yields an algorithm with runtime $O(|E| \cdot |Q|)$.
To further improve the space complexity, we first recall the (well known) linear time algorithm for solving safety games in Subsection~\ref{subsec:safety_games}, and then provide in Subsection~\ref{subsec:space-efficiency} 
a space efficient implementation of this algorithm over $\game \chain \auto$. 
This reduces the space requirement to $O(|V|)$.
Beyond space efficiency, this section provides an interesting structural insight: 
solving $\game \chain \auto$ when $\auto$ is linear precisely amounts to running a value iteration algorithm.

\begin{thm}[Complexity of solving the product game with a linear graph]
\label{thm:solving_product_game_linear}
Let $n \in \N$, $\Omega$ an objective, $\game$ a game of size $n$ with prefix independent positional objective $\Omega$, 
and $\auto = (Q, \Delta)$ a linear $(n,\Omega)$-universal graph.
Then the winning region in $\game$ can be computed~\footnote{In the RAM model with word size $w = \log(|V|) + \log(|Q|)$.} 
in time $O(|E| \cdot |Q|)$ and space $O(|V|)$.
\end{thm}

\subsection{Simplifying the product game if $\auto$ is linear}
\label{subsec:simplification_linear}

We defined the product game $\game \chain \auto$ for the general case of a non-deterministic automaton $\auto$.
If $\auto$ is linear, the product game can be simplified: from $(e,q)$ where $e=(v,c,v')$, Eve should always pick as large a successor in $Q$ as possible (if there is any successor).
More precisely, we let $\delta(q,c) = \max\{q' \in Q \mid (q,c,q') \in \Delta\}$ 
and consider the safety game $\game \mchain \auto$ with vertices $V' = \VE' \uplus \VA'$ and edges $E'$ given by
\[
\begin{array}{lll}
\VE' & = & \VE \times Q\ \cup \set{\bot}, \\
\VA' & = & \VA \times Q, \\
E' & = & \{((v,q), (v',\delta(q,c))) : q \in Q, (v,c,v') \in E, \delta(q,c) \text{ is defined}\} \\
 & \cup & \{((v,q), \bot) : q \in Q, (v,c,v') \in E, \delta(q,c) \text{ is not defined}\}.
\end{array}
\]

In words, from $(v,q) \in V \times Q$, the player whom $v$ belongs to chooses an edge $e=(v,c,v') \in E$, and the game progresses to $(v',\delta(q,c))$ if $q$ has an outgoing edge with colour $c$ in $\auto$, and to $\bot$ otherwise, which is losing for Eve.

It is clear that Eve wins from $(v,q)$ in $\game \chain \auto$ (as defined in Subsection~\ref{subsec:product}) 
if she wins from $(v,q)$ in $\game \mchain \auto$, simply by always choosing $\delta(q,c)$ as a successor from $(e,q)$ where $e = (v,c,v')$.
The converse is not much harder: just as in the proof of Lemma~\ref{lem:determinisation}, by an easy induction, picking the largest successor ensures to always remain larger on the second component, which allows for more possibilities (and in particular, avoids sinks) thanks to left composition.

\begin{rem}\label{rmk:upward-closed}
This also proves that if $(v,q)$ is in the winning region of $\game \chain \auto$ and $q' \geq q$, then $(v,q')$ is also winning. In particular, $v$ is winning in $\game$ if and only if there exists $q \in Q$ such that $(v,q)$ is winning in $\game \chain \auto$ if and only if $(v, \max Q)$ is winning in $\game \chain \auto$.
\end{rem}

From here on, since we solve $\game \chain \auto$ for linear $\auto$, we shall use the new (simpler) definition $\mchain$ instead of $\chain$.

\subsection{Solving safety games in linear time}
\label{subsec:safety_games}

We recall the standard linear time algorithm for safety games. Let $\game$ be a safety game, and consider the operator $\Safe$ over subsets of $V$ given by
\[
\begin{array}{lll}
\Safe(X) & = & \{v \in \VE \mid \exists v', (v,v') \in E\text{ and } v' \in X\} \\
& \cup & \{v \in \VA \mid \forall v', (v,v') \in E \implies v' \in X\},
\end{array}
\]
that is, $\Safe(X)$ is the set of vertices from which Eve can ensure to remain in $X$ for one step. 
Note that $\Safe(X)$ always includes sinks controlled by Adam and excludes sinks controlled by Eve,
and it is monotone: if $X \subseteq Y$, then $\Safe(X) \subseteq \Safe(Y)$.

It is well known that the winning region in $\game$ is the greatest fixpoint of $\Safe$. This can be computed via Kleene iteration by setting $X_0=V$ and $X_{i+1} = \Safe(X_i)$, which stabilizes in at most $|V|$ steps. Since computing $\Safe$ requires $O(|E|)$ operations, a naive implementation yields runtime $O(|V| \cdot |E|)$.

To obtain linear runtime $O(|E|)$, let us define $R_i = X_{i} \setminus X_{i+1}$, the set of vertices which are removed upon applying $\Safe$ for the $(i+1)$\textsuperscript{-th} time (see Figure~\ref{fig:safety_game}).
Then $R_0$ is the set of sinks controlled by Eve, and moreover we show that for $i \geq 1$, $R_i$ can be computed efficiently from $R_{i-1}$, provided one maintains a data structure $\cnt$, which stores, for each $v \in \VE$, the number of edges from $v$ to $X_i$. This is based on the observation that a vertex in $\VA \cap X_i$ belongs to $R_i$ if and only if it has an edge towards $R_{i-1}$, and a vertex in $\VE \cap X_i$ belongs to $R_i$ if and only if it has no outgoing edge towards $X_i$, and moreover it has an edge towards $R_{i-1}$ (otherwise, it would not belong to $X_i$).

Hence to compute $R_i$ and update $\cnt$, it suffices to iterate over all edges $e=(v,v')$ such that $v' \in R_{i-1}$ and $v \in X_i$: if $v \in \VA$ then $v \in R_i$, and if $v \in \VE$, then $\cnt(v)$ is decremented, and $v$ is added to $R_i$ if 0 is reached, that is, $v$ has no edge going to $X_i$. We give a pseudo-code in Appendix~\ref{sec:safety} for completeness.
The overall linear complexity of the algorithm simply follows from the fact that each edge is considered at most once.

\begin{figure}
\includegraphics[width = 0.8 \linewidth]{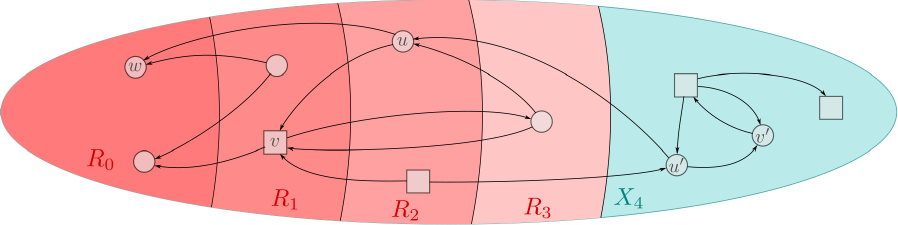}
\caption{Example of a safety game, where Eve vertices are depicted with circles and Adam vertices with squares. The iteration terminates after 4 steps: $X_4 = \Safe(X_4)$. In this example, $\cnt(u)$ is initialized at two, decremented when $w$ is added to $R_0$, and decremented again, reaching 0, when $v$ is added to $R_1$. Hence $u$ is then added to $R_2$. In contrast, $\cnt(u')$ never reaches 0 thanks to the edge towards $v'$.}
\label{fig:safety_game}
\end{figure}

\subsection{Instantiating to $\game \chain \auto$: a space efficient implementation}
\label{subsec:space-efficiency}

We now instantiate the previous algorithm for safety games to $\game \chain \auto$ 
as defined in~\ref{subsec:simplification_linear}.

For a vertex $v \in V$ and a set $X$ of vertices of $\game \chain \auto$, we let $X^v = \set{q \in Q \mid (v,q) \in X}$. 
As we have discussed (see Remark~\ref{rmk:upward-closed}), the winning region $W$ is such that $W^v$ is upward-closed for all $v$.
We will see that when instantiating the previous algorithm to $\game \chain \auto$, that is, letting $X_0 = V(\game \chain \auto)$ and $X_{i+1} = \Safe(X_i)$, it always holds that $X_i^v$ is upward-closed for all $v$, and hence it is specified by its minimal element.
Given $X$ such that $X^v$ is upward-closed for all $v$, we let $\theta_X : V \to Q \cup \{\top\}$ be defined by $\theta(v)=\min X^v$, with value $\top$ if $X^v = \emptyset$.
We now characterize the effect of $\Safe$ over sets represented in this manner.
Recall that $\rho(q',c)=\min\{q \in Q \mid (q,c,q') \in \Delta\} \in Q \cup \{\top\}$.

\begin{lem}\label{lem:charac_update}
Let $X$ be such that $X^v$ is upward-closed for all $v$, and let $X'=\Safe(X)$. Then $X'^v$ is upward-closed for all $v$, and $\theta_{X'}$ is given by
\[
\theta_{X'}(v)=
\begin{cases}
\min\{\rho(\theta_X(v'),c) \mid (v,c,v') \in E\} $ if $ v \in \VE, \\
\max\{\rho(\theta_X(v'),c) \mid (v,c,v') \in E\} $ if $ v \in \VA.
\end{cases}
\]
\end{lem}

\begin{proof}
Note that for $q,q' \in Q$ and $c \in C$ we have
\[
\delta(q,c) \geq q' \iff (q,c,q') \in \Delta \iff q \geq \rho(q',c),
\]
where the first equivalence holds by right composition while the second holds by left composition.

Let $v \in \VE$. Then
\[
\begin{array}{lll}
(v,q) \in X' =  \Safe(X) & \iff & (v,q) \text{ has an edge to } X \text{ in } \game \chain \auto \\
& \iff & \exists (v,c,v') \in E, (v', \delta(q,c)) \in X \\
& \iff & \exists (v,c,v') \in E, \delta(q,c) \geq \theta(v') \\
& \iff & \exists (v,c,v') \in E, q \geq \rho(\theta(v'),c)
\end{array}
\] 
Hence, $X'^v$ is upward-closed, and given by
\[
\theta_{X'}(v) = \min \set{q \in Q \mid (v,q) \in \Safe(X)} = \min \set{\rho(\theta(v'),c) : (v,c,v') \in E}.
\]
For $v \in \VA$ replacing $\exists$'s by $\forall$'s in the previous chain of equivalences yields 
\[
\theta_{X'}(v) = \min \set{q \in Q \mid (v,q) \in \Safe(X)} = \max \set{\rho(\theta(v'),c) : (v,c,v') \in E}.
\qedhere\]
\end{proof}

Let $X_0=V(\game \chain \auto)$ and $X_{i+1}=\Safe(X)$.
By a straightforward induction, for all $i$ and all $v$, $X_i^v$ is upward-closed, and we let $\theta_i = \theta_{X_i}$.
By the above lemma, for each $v$, $\theta_{i+1}(v)$ can be computed from the value of $\theta_i$ at each successor $v'$ of $v$.
Note that by monotonicity of $\Safe$, for each $v$ we have $\theta_{i+1}(v) \geq \theta_i(v)$.
Following the usual terminology for value iteration algorithms, we say that a vertex $v$ is invalid at step $i$ if $\theta_{i+1}(v) > \theta_{i}(v)$, or equivalently, $(v,\theta_i(v)) \in X_i \setminus \Safe(X_i)$.
We let $\Inv_i \subseteq V$ denote the set of vertices which are invalid at step $i$.
To compute $\theta_{i+1}$ from $\theta_i$, it suffices to update the value of $\theta_i(v)$ for each $v \in \Inv_i$. See Figure~\ref{fig:product_safety_game}.

\begin{figure}
\includegraphics[width = 0.8 \linewidth]{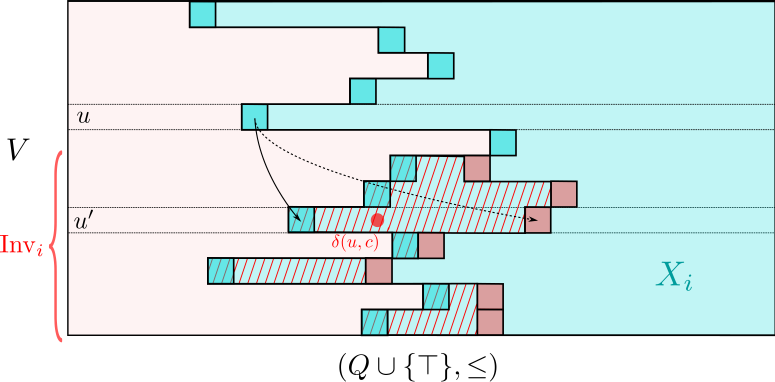}
\caption{Computing the $i$-th layer in the safety game $\game \chain \auto$. For each $v$, $X_i^v$ is upward-closed, and $X$ is represented by $\theta:V \to Q \cup \{\top\}$. The blue squares represent vertices of the form $(v,\theta(v))$, which lie at the border of $X_i$. The hatched area represents the $i$-th layer $X_i \setminus X_{i+1}$. Computing the position of the red squares, given by $\theta'$, is achieved using Lemma~\ref{lem:charac_update}. Computing the new set $\Inv_{i+1}$ of invalid vertices is achieved by inspecting $c$-predecessors $u$ in $\game$ of each $u' \in \Inv_i$: if $\theta(u') \leq \delta(u,c) < \theta'(u')$, then there was an edge in $\game \chain \auto$ from $(u,\theta(u))$ towards $X_i^{u'}$, but there is no edge from $(u,\theta'(u))$ to $X_{i+1}^{u'}$. Hence, if $u \in \VA$ it is added to $\Inv_{i+1}$, and if $u \in \VE$, $\cnt(u)$ is decremented, and $u$ is added to $\Inv_{i+1}$ if 0 is reached.}
\label{fig:product_safety_game}
\end{figure}

Following the previous generic linear time algorithm for solving safety games, we store for each Eve vertex the number of edges from $(v,\theta_i(v))$ to $X_i$, which allows to efficiently compute the set of vertices which are invalid at step $i$ from those which are invalid at step $i-1$. We again give a full pseudo-code in Appendix~\ref{sec:product_safety}.

Let us conclude with analysing the complexity of the generic value iteration algorithm
in the RAM model with word size $w = \log(|V|) + \log(|Q|)$, in which computing $\delta$ and $\rho$, checking whether $q \le q'$ and evaluating $\theta$ require constant time, while storing $\theta$ requires space linear in $|V|$.

Note that at step $i$, each vertex $v \in \Inv_i$ sees a strict increase in $\theta_i(v) \in Q \cup \{\top\}$. Hence, there is at most $|Q|+1$ values of $i$ such that $v \in \Inv_i$. Moreover, computing the $i$\textsuperscript{th} update requires, for each $v \in \Inv_i$, $O(\outdeg(v))$ constant-time operations for computing $\theta'(v)$, and $O(\indeg(v) + \outdeg(v))$ constant-time operations for updating $\cnt$ (at $v$ and its predecessors) and computing $\Inv_{i+1}$.

Summing over all vertices, we obtain the wanted $O(|E| \cdot |Q|)$ time complexity.

\begin{rem}
The value iteration algorithm we present here instantiates a linear time implementation of the (global) fixpoint computation $X_{i+1} = \Safe(X)$ to the safety game $\game \chain \auto$. There exists a variation of this algorithm which also achieves linear time for solving generic safety games, and works locally: keep a queue $R$ of vertices which should be removed from $X$, and a data structure $\cnt$ storing the number of outgoing edges from $v \in \VE$ to $X$ and iteratively pick a vertex from $R$, remove it from $X$, and update $\cnt$ (potentially adding them to $R$) on each predecessor of $v$.

It is not hard to instead adapt the local variant of the algorithm to the safety game $\game \chain \auto$, which yields another slightly different generic value iteration procedure (with same complexity). The local variant is more in line with known value iteration algorithms for parity games~\cite{J00,JL17,FJSSW17} as well as mean-payoff games~\cite{BCDGR11,DKZ19}. We chose to present the global variant for its structural simplicity, and direct ties with the fixpoint iteration. For details about the local variant of the generic algorithm, we refer to~\cite{FGO18}.\end{rem}

This concludes the first part of the paper developing the theory of universal graphs.
In the second part, we will consider four (classes of) objectives and construct algorithms from universal graphs.

\section{Parity games}
\label{sec:parity}
The first objective we consider is parity. 
We refer to Subsection~\ref{subsec:complexity_parity} 
for a discussion on existing results.

The set of colours is $C = [0,d]$, where $d$ is even and we study the objective given by
\[
\Parity(d) = \{w \in [0,d]^\omega \mid \limsup w \text{ is even} \}.
\]
In the context of parity games, letters (or colours) are usually called priorities.

We give a convenient characterization of graphs which satisfy the even parity objective.
We say that a cycle is even if its maximal priority is even, and odd otherwise.
\begin{lem}{(Graphs satisfying $\Parity(d)$)}
Let $G=(V,E)$ be a $[0,d]$-graph. Then $G$ satisfies $\Parity(d)$ if and only if all cycles in $G$ are even.
\end{lem}

\begin{proof}
Assume that $G$ satisfies $\Parity(d)$.
Let $C$ be a cycle in $G$.
Repeating $C$ induces an infinite path $\pi$ in $G$ with $\col(\pi) = \col(C)^\omega$.
Then $\limsup \col(\pi) = \max \col(C)$ is even, and $C$ is even.

Conversely, assume $G$ has only even cycles, and pick an infinite path $\pi$ in $G$.
Then there is a vertex $v$ visited infinitely often by $\pi$, and then $\pi$ may be decomposed into an infinite sequence of cycles $C_1, C_2, \dots$ which all start and end in $v$.
Then $\limsup \col(\pi) = \limsup_i \col(\max(C_i))$, which is even since for all $i$, $\max(\col(C_i))$ is even.
\end{proof}

The following well known theorem states that the parity objective satisfies our assumptions.

 \begin{thmC}[\cite{Emerson&Jutla:1991,McNaughton:1993,Mostowski:1991}]
 Parity objectives are positionally determined, prefix independent, and have 0 as a neutral letter.
 \end{thmC}

\subsection{A universal graph of exponential size}

Let us fix a non-negative integer $n$.
We give a first construction of a $(n,\Parity(d))$-universal graph $\U=(U,F)$.
Its set of vertices is given by
\[
U = [0,n-1]^{d/2},
\]
that is, tuples of $d/2$ non-negative integers smaller than $n$.
Such tuples represent occurrences of odd priorities and for convenience we index with them using odd integers, and in a decreasing fashion, that is, $v \in U$ is denoted $v=(v_{d-1},v_{d-3}, \dots, v_3, v_1)$.

Following~\cite{J00}, we define an increasing sequence of total preorders 
\[
\leq_1 \ \subseteq \ \leq_3 \ \subseteq \ \dots \ \subseteq \ \leq_{d-1} \ \subseteq \ \leq_{d+1}
\]
over $U$, which are obtained by restricting to the first few values, and comparing lexicographically:
\[
v \leq_p v' \iff (v_{d-1}, v_{d-3}, \dots, v_p) \leq_{\lex} (v'_{d-1}, v'_{d-3}, \dots, v'_p).
\]
We use $v =_p v'$ to denote the equivalence relation induced by $\leq_p$, that is $v \leq_p v'$ and $v' \leq_p v$.
We write $v <_p v'$ if $\neg(v' \leq_p v)$.
Note that $\leq_{d+1}$ is the full relation: for all $v,v'$ we have $v =_{d+1} v'$.
At the other end of the scope, $\leq_1$ is antisymmetric, or in other words its equivalence classes are trivial and $=_1$ coincides with the equality.

The set of edges of $\U$ is given by
\[
(v,p,v') \in F \iff \begin{cases}
v >_p v' $ if $p$ is odd,$\\
v \geq_{p+1} v' $ if $p$ is even.$
\end{cases}
\]

Note that this condition is vacuous for $p=d$, every pair of vertices is connected (in both directions) by a $d$-edge.

\begin{lem}
\label{lem:first_construction_for_parity}
The graph $\U$ is $(n, \Parity(d))$-universal.
\end{lem}

\begin{proof}
We first show that $\U$ satisfies $\Parity(d)$.
Let $\pi=v_0 p_0 v_1 p_1 \dots$ be an infinite path in $\U$, and assume for contradiction that $p = \limsup_i p_i$ is odd: for all large enough $i$ we have $p_i \leq p$, and for infinitely many $i$, $p_i = p$.
Then for all large enough $i$ we have $(v_{i,d-1}, v_{i,d-3}, \dots v_{i,p}) \geq_\lex (v_{i+1,d-1}, v_{i+1,d-3}, \dots, v_{i+1,p})$ and this inequality is strict for infinitely many $i$'s, a contradiction.

We now show that $\U$ embeds all graphs of size $\leq n$ which satisfy $\Parity(d)$.
Let $G=(V,E)$ be such a graph.
Let $v_0 \in G$, let $\pi=v_0 p_0 v_1 p_1 \dots$ be a path from $v_0$ in $G$, and let $p$ be an odd priority.
Consider the number $\occ_{\pi,p}$ of occurrences of $p$ in $\pi$ before a priority greater than $p$ is seen.
We claim that $\occ_{\pi,p} < n$, by a pumping argument.
Assume for contradiction that $\occ_{\pi,p} \geq n$, and let $i_1, \dots i_n$ denote the first $n$ occurrences of $p$ in $\pi$.
We have $p_{i_1} = \cdots = p_{i_n} = p$, and for all $i \leq i_n$, $p_i \leq p$.
Then the $n+1$ vertices $v_{0},v_{i_1+1},v_{i_2+1}, \dots, v_{i_n+1}$ in this order, all have a path with maximal priority $p$ to the next one.
Since $G$ has $n$ vertices, there must be a repetition in this sequence, which induces an odd cycle.
Given a path $\pi$, we let $\occ_\pi = (\occ_{\pi,d-1}, \occ_{\pi, d-3}, \dots, \occ_{\pi,1})$.

We now define $\phi : V \to U$ by $\phi(v) = \max_\pi \occ_\pi$, where the max is taken lexicographically over all paths starting in $v$.
We claim that $\phi$ defines a graph homomorphism.
Let $e=(v,p,v')$ be an edge in $G$ and let $\pi'$ be a path from $v'$ in $G$ with maximal $\occ$.
Then $\pi = v p \pi'$ is a path from $v$ in $G$, satisfying $\occ_{\pi, p} = \occ_{\pi', p} + 1$ if $p$ is odd, and for all odd $\ell > p$, $\occ_{\pi, \ell} = \occ_{\pi', \ell}$.
If $p$ is odd this implies $\occ_{\pi} >_p \occ_{\pi'} = \phi(v')$, and if $p$ is even we have $\occ_{\pi} \geq_{p+1} \occ_{\pi'} = \phi(v')$.
In both cases, this yields $(\phi(v),p,\phi(v')) \in F$ since $\phi(v) \geq_\lex \occ_{\pi}$.
\end{proof}

The set $[0,n-1]^{d/2}$ of vertices of $\U$ can be seen as the set of leaves of the (ordered) complete tree of height $d/2+1$ and of degree $n$.
In this point of view, the equivalence relations $=_{2k+1}$ groups together leaves which belong to the same subtree at level $k$ (where by convention, the leaves are at level 0), hence the equivalence classes correspond to node of level $k$, which are naturally ordered left-to-right by $\leq_{2k+1}$.
In other words, a sequence of preorders $\leq_1 \subseteq \leq_3 \subseteq \dots \subseteq \leq_{d+1}$ over some set $V$ naturally induces a tree structure of height $d/2+1$ with leaves $V$.

It is not hard to see that $\U$ is saturated for the parity objective: the addition of any edge induces an odd cycle.
We will actually prove that every saturated graph is given by such a sequence of preorders, which induces the structure of a tree.
This will allow us to reformulate the universality condition over parity graphs as a universality condition over trees.

\subsection{Trees, tree-like graphs, and saturated parity graphs}
\label{subsec:saturated_graphs_parity}

Towards describing the structure of saturated graphs for the parity objective, we now define (ordered, levelled) trees.

A tree of height $h \geq 0$ is a finite subset $T$ of $\Z^h$.

As previously, we think of elements of a tree as representing occurrences of odd priorities. In this regard it is convenient to use odd numbers as indices, and we shall write $v=(v_{d-1}, v_{d-3}, \dots, v_{3}, v_1) \in T$, where $d=2h$.

A tree of height $h$ naturally defines an increasing sequence of $h+1$ total preorders
\[
\leq_1 \subseteq \leq_3 \subseteq \dots \subseteq \leq_{d-1} \subseteq \leq_{d+1},
\]
given by
\[
(v_{d-1}, v_{d-3}, \dots, v_{1}) \leq_p (v'_{d-1}, v'_{d-3}, \dots, v'_{1}) \iff (v_{d-1}, \dots, v_p) \leq_{\lex} (v'_{d-1}, v'_{d-3}, \dots, v'_p).
\]
Again, we use the standard notations relative to total preorders, $=_p$ denotes the equivalence relation associated with $\leq_p$, whose equivalence classes are stricly ordered by $<_p$.

\begin{figure}[ht]
\centering
\includegraphics[width=.22\linewidth]{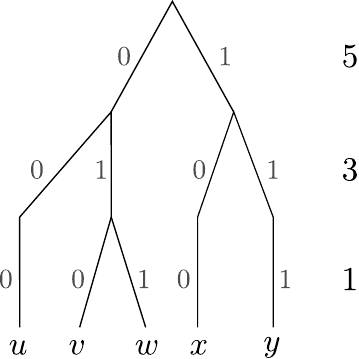}
\caption{The tree $T=\{000, 010, 011, 100, 111\}=\{u,v,w,x,y\}$. It has height $h=3$. We have $u =_5 v =_5 w <_5 x =_5 y$, and $u <_3 v =_3 w <_3 x <_3 y$.}
\label{fig:example_tree}
\end{figure}

Note that $\leq_{d+1}$ is the full relation, it has one equivalence class.
The order $\leq_1$ is antisymmetric; it is a total order, which coincides with $\leq_\lex$, and $=_1$ coincides with the equality over $T$.
Note that there is a unique tree of height $0$ which we call the empty tree, and that trees of height $1$ are sets of integers.

A tree $T$ of height $h$ induces a $[0,d]$-graph $\Graph(T)$ with vertices $T$ defined as follows:
\begin{itemize}
\item for even $p$, $(v,p,v') \in E$ if and only if $v \geq_{p+1}^T v'$, and
\item for odd $p$, $(v,p,v') \in E$ if and only if $v >_{p}^T v'$.
\end{itemize}

We say that graphs of the form $\Graph(T)$, where $T$ is a tree, are tree-like.

\begin{lem}
\label{lem:trees_satisfy_parity}
Let $d \in \N$ be an even number and $T$ a tree of height $d/2$, then the graph $\Graph(T)$ satisfies $\Parity(d)$, and moreover, it is saturated.
\end{lem}

\begin{proof}
We first show that $\Graph(T)$ satisfies $\Parity(d)$, which is a direct adaptation of the first part of the proof of Lemma~\ref{lem:first_construction_for_parity}. We spell it out for completeness.
Let $\pi=v_0 p_0 v_1 p_1 \dots$ be an infinite path in $\Graph(T)$, and assume for contradiction that $p = \limsup_i p_i$ is odd: for all large enough $i$ we have $p_i \leq p$, and for infinitely many $i$, $p_i = p$.
Then for all large enough $i$ we have $(v_{i,d-1}, v_{i,d-3}, \dots v_{i,p}) \geq_\lex (v_{i+1,d-1}, v_{i+1,d-3}, \dots, v_{i+1,p})$ and this inequality is strict for infinitely many $i$'s, a contradiction.

We now show that adding any edge to $\Graph(T)$ yields a graph not satisfying $\Parity(d)$.
Let $e = (v,p,v')$ be an edge that does not appear in $\Graph(T)$. 
If $p$ is even, then $v' >_{p+1} v$, 
so the edge $(v',p+1,v)$ belongs to $\Graph(T)$, hence $\Graph(T)_e$ contains an odd cycle. 
If $p$ is odd, then $v' \geq_{p} v$, which implies that the edge $(v',p-1,v)$ belongs to $\Graph(T)$, 
and again $\Graph(T)_e$ contains an odd cycle. 
\end{proof}

Now the second part of the proof of Lemma~\ref{lem:first_construction_for_parity} can be re-interpreted as follows.

\begin{lem}
\label{lem:universality_of_treelike}
Let $G=(V,E)$ be a graph satisfying $\Parity(d)$, where $d$ is even. There exists a tree $T$ of height $d/2$ and size $\leq |V|$ such that $G$ embeds in $\Graph(T)$.
\end{lem}

\begin{proof}
We precisely mimic the proof of Lemma~\ref{lem:first_construction_for_parity}.
Given a path $\pi$ in $G$ and an odd priority $p \in \{1,3, \dots, d-1\}$, we again define $\occ_{\pi,p}$ to be the number of occurrences of $p$ in $\pi$ before a priority greater than $p$ is seen, and we let $\occ_{\pi}=(\occ_{\pi,d_1}, \occ_{\pi,d-3}, \dots, \occ_{\pi,1})$.
It is clear, since $G$ satisfies $\Parity(d)$, that all $\occ_{\pi,p}$'s are finite.
For $v \in V$, we define $\phi(v)$ to be the lexicographic maximum of $\occ_\pi$'s over paths $\pi$ starting in $v$, and put $T = \phi(V)$.

%Let us prove that $T$ is a tree. Let $v_0 \in V$, let $\pi=v_0 p_0 v_1 p_1 \dots$ be such that $\phi(v_0)=\occ_\pi$ and let $p$ be an odd priority such that $\occ_{\pi,p} >0$.
%We consider the first occurrence of $p$ in $\pi$, that is, let $i_0$ be minimal such that $p_{i_0} = p$, and we let $\pi' = v_{i_0 +1} p_{i_0} +1 \dots$ denote the corresponding suffix of $\pi$.
%
%Then for all $i < i_0$ we have $p_i < p$, otherwise we would have $\occ_{\pi,p} = 0$.
%Hence, we have $\occ_{\pi',p} = \occ_{\pi,p}-1$, and for all odd $\ell >p$, $\occ_{\pi', \ell} = \occ_{\pi, \ell}$.
%We prove that $\phi(v_{i_0+1}) =_p \occ_{\pi'}$, which yields the wanted result: $T$ is a tree.
%We already know that $\phi(v_{i_0+1}) \geq_\lex \occ_{\pi'}$ which implies $\phi(v_{i_0+1}) \geq_p \occ_{\pi'}$.
%Assume for contradiction that the above inequality is strict, and let $\pi''$ be a path from $v_{i_0+1}$ satisfying $\occ_{\pi''} >_p \occ_{\pi'}$.
%Then the path $\pi'''= v_0 p_0 \dots v_{i_0} p_{i_0} \pi''$ satisfies $\occ_{\pi'''} >_p \occ_\pi$ contradicting the maximality $\occ_\pi$ among paths starting from $v_0$.

We prove that $\phi$ defines a homomorphism from $G$ to $\Graph(T)$.
Let $e=(v,p,v')$ be an edge in $G$ and let $\pi'$ be a path from $v'$ in $G$ with maximal $\occ$.
Then $\pi = v p \pi'$ is a path from $v$ in $G$, satisfying $\occ_{\pi, p} = \occ_{\pi', p} + 1$ if $p$ is odd, and for all odd $\ell > p$, $\occ_{\pi, \ell} = \occ_{\pi', \ell}$.
If $p$ is odd this implies $\occ_{\pi} >_p \occ_{\pi'} = \phi(v')$, and if $p$ is even we have $\occ_{\pi} \geq_{p+1} \occ_{\pi'} = \phi(v')$.
In both cases, this yields $(\phi(v),p,\phi(v')) \in F$ since $\phi(v) \geq_\lex \occ_{\pi}$.
\end{proof}

Actually, it can be shown that, modulo contracting equivalence classes of $0$-edges which have the same ingoing and outgoing edges, saturated graphs are exactly tree-like graphs.
This is non-essential for what follows, but we prove it as an interesting side remark.

\begin{lem}
\label{lem:saturated_graphs_treelike}
Let $G=(V,E)$ be a saturated graph with respect to $\Parity(d)$, where $d$ is even, and such that for all $v,v'$, if both $(v,0,v')$ and $(v',0,v)$ are edges in $G$ then $v=v'$.
Then there exists a tree $T$ of height $d/2$ such that $G = \Graph(T)$ up to renaming the vertices.
\end{lem}

\begin{proof}
By Lemma~\ref{lem:universality_of_treelike}, there exist $T$ of height $d/2$ and size $\leq |V|$ and a homomorphism $\phi$ from $G$ to $\Graph(T)$.
It suffices to prove that $\phi$ is injective.
Indeed, this ensures that the sets of vertices of $G$ and $\Graph(T)$ are in bijection; moreover an edge $(\phi(v),c,\phi(v'))$ belongs to $\Graph(T)$ whenever $(v,c,v')$ belong to $G$ since $\phi$ is a homomorphism, and the converse is true by saturation.

Let $v,v' \in V$ be such that $\phi(v)=\phi(v')$.
We show that $e=(v,0,v')$ and $e'=(v',0,v)$ belong to $E$, which implies that $v=v'$.	
Assume $e$ and $e'$ do not both belong to $E$.
Since $G$ is saturated, there exists an infinite path $\pi=v_0 p_0 v_1 p_1 \dots$ in $G_{e,e'}$ with $\limsup_i p_i$ odd.
But by definition of $\Graph(T)$, its vertices, and in particular $\phi(v)$, have $0$-self-loops, which implies that $\phi(e)=\phi(e')$ are edges in $\Graph(T)$.
Hence, $\phi(\pi)= \phi(v_0) p_0 \phi(v_1) p_1 \dots$ is a path in $\Graph(T)$ with odd limsup, a contradiction.
\end{proof}

Interpreting saturated graphs as trees allows us to rephrase the universal property for graphs as a simpler one for trees.

We say that a map $\phi: T \to T'$ where $T$ and $T'$ are trees of height $h$ is a tree-homomorphism if it preserves all orderings, that is, for all $v,v' \in T$, and for all odd $p$, 
\[
v \leq^T_p v' \iff \phi(v) \leq^{T'}_p \phi(v').
\]
An example is depicted in Figure~\ref{fig:example_embedding}.

\begin{lem}
\label{lem:equivalence_homomorphisms}
Let $T$ and $T'$ be two trees of height $d/2$, and let $\phi:T \to T'$.
Then $\phi$ is a tree-homomorphism if and only if it is a graph homomorphism from $\Graph(T)$ to $\Graph(T')$.
\end{lem}

\begin{proof}
Recall that $v <_p v'$ is the negation of $v \geq_p v'$. Hence we have that
\[
\begin{array}{ccl}
& & \phi $ is a tree-homomorphism $ \\[\medskipamount]
& \iff & \forall v,v' \in V, \forall p$ odd, $ (v \leq^T_p v' \iff \phi(v) \leq_p^{T'} \phi(v')) \\[\medskipamount]

& \iff & \forall v,v' \in V
\left\{ \!
\begin{array}{cl}
\forall p $ even, $& (v \leq_{p+1}^{T} v' \implies \phi(v) \leq_{p+1}^T \phi(v')) \\
\forall p $ odd, $& (v <_p^{T} v' \implies \phi(v) <_{p}^{T} \phi(v'))
\end{array}
\right.\\[\bigskipamount]

& \iff & \forall v,v' \in V, \forall p, ((v',p,v) \in E(\Graph(T)) \implies (\phi(v'),p,\phi(v) \in E(\Graph(T')))\\[\medskipamount]
& \iff & \phi$ is a graph homomorphism,$
\end{array}
\]
the wanted result.
\end{proof}

Thanks to Lemma~\ref{lem:universality_of_treelike} and~\ref{lem:equivalence_homomorphisms}, we may now shift our attention only to trees and their homomorphisms.
We say that a tree is $(n,h)$-universal if it has height $h$ and embeds all trees of height $h$ and size $\leq n$.

\begin{figure}[!ht]
\begin{center}
\includegraphics[width=.7\linewidth]{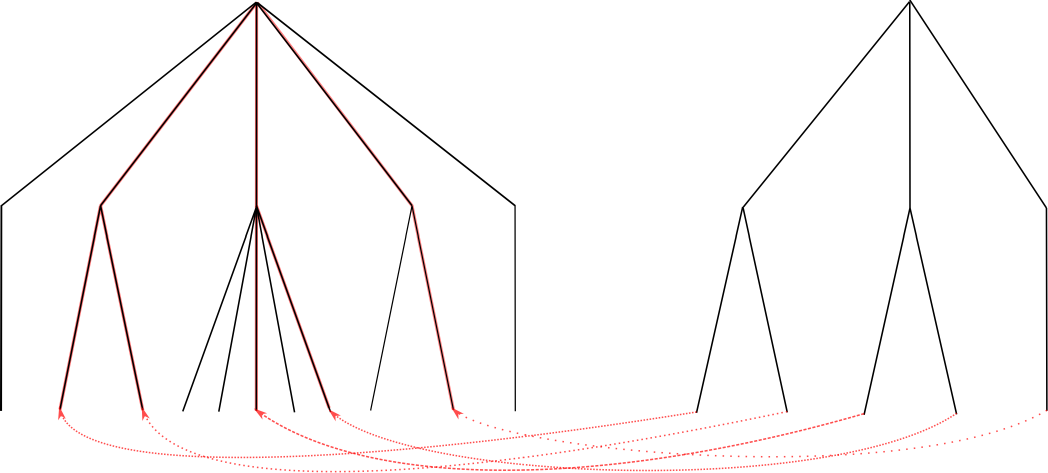}
\end{center}
\caption{On the left, a tree of height $2$ and size $11$.
This tree is $(5,2)$-universal, and is actually of minimal size.
On the right, a tree of size $5$ and one possible embedding in the universal tree.}
  \label{fig:example_embedding}
\end{figure}

\begin{lem}
The smallest $(n,h)$-universal trees and the smallest $(n, \Parity {2h})$-universal graphs have the same size.
\end{lem}

\begin{proof}
Let $G$ be a $(n, \Parity{2h}$-universal graph. By Lemma~\ref{lem:universality_of_treelike}, there is a tree $T$ of height $h$ not larger than $G$ and such that $G$ embeds into $\Graph(T)$.
Let $T'$ be a tree of height $h$ and size $n$.
By universality of $G$, there is a homomorphism from $\Graph(T')$ to $G$, which by composition induces a homomorphism from $\Graph(T')$ to $\Graph(T)$.
We conclude by Lemma~\ref{lem:equivalence_homomorphisms} that $T$ embeds $T'$, and hence $T$ is $(n,h)$-universal.

Conversely, let $T$ be $(n,h)$-universal, and let $G'$ be a graph satisfying $\Parity{2h}$.
By Lemma~\ref{lem:universality_of_treelike} there is a tree $T'$ of height $h$ such that $G'$ maps into $\Graph(T')$.
Now $T'$ maps into $T$ by universality, by Lemma~\ref{lem:equivalence_homomorphisms}, this translates into a map from $\Graph(T')$ to $\Graph(T)$, and finally by composition on the left, $G'$ maps into $\Graph(T)$.
We conclude that $\Graph(T)$ is $(n,\Parity{2h})$-universal, and it has the same size as $T$.
\end{proof}

\subsection{Upper and lower bounds on universal trees}

We now study the size of the smallest $(n,h)$-universal trees.
In this subsection, we simply index elements of trees of height $h$ with integers from $h$ to $1$ for clarity.

\begin{thm}
\label{thm:universal_trees}
Let $n, h \in \N$.
\begin{itemize}
	\item There exists a $(n,h)$-universal tree of size at most 
	\[
	2n \cdot \binom{\lceil \log(n) \rceil + h - 1}{\lceil \log(n) \rceil},
	\]
	\item All $(n,h)$-universal trees have size at least 
	\[
	\binom{\lfloor \log(n) \rfloor + h - 1}{\lfloor \log(n) \rfloor}.
	\]
\end{itemize}
\end{thm}

Let us start with the upper bound.

\begin{prop}
\label{prop:upper_bounds_universal_tree}
There exists a $(n,h)$-universal tree with size $f(n,h)$, where $f$ satisfies the following recursion
\[
\begin{array}{lll}
f(n,h) & = & f(n,h-1) + f(\lfloor n/2 \rfloor,h) + f(n - 1 - \lfloor n/2 \rfloor,h), \\
f(n,1) & = & n, \\
f(1,h) & = & 1.
\end{array}
\]
\end{prop}

We refer to Appendix~\ref{sec:analysis_F} for an analysis of this recurrence, leading to the upper bound stated
in Theorem~\ref{thm:universal_trees}.
We further analyse this function when presenting the induced algorithm in Subsection~\ref{subsec:complexity_parity}.

\begin{proof}
We construct a $(n,h)$-universal tree $T$ by induction over $(n,h)$, lexicographically.
Trees of height $1$ and size $\leq n$ are (downward-closed) subsets of $[0, n-1]$, hence $[0,n-1]$ is $(n,1)$-universal.
There is a unique tree of size $1$ and height $h$, namely, $\{0^h\}$, and it is $(1,h)$-universal.
Now, let $n$ and $h$ be $>1$ and assume constructed $(n',h')$-universal trees of size $f(n',h')$ for each $(n',h') <_\lex (n,h)$.
Let
\begin{itemize}
	\item $T_\tleft$ be a $(\lfloor n/2 \rfloor,h)$-universal tree,
	\item $T_\tmiddle$ be a $(n,h-1)$-universal tree, and
	\item $T_\tright$ be a $(\lceil n/2 \rceil - 1,h)$-universal tree.
\end{itemize}
Intuitively, we construct $T$ by merging the roots of $T_\tleft$ and of $T_\tright$ and inserting in between a child of the root to which $T_\tmiddle$ is attached.

\begin{figure}[!h]
\centering
\includegraphics[scale=.56]{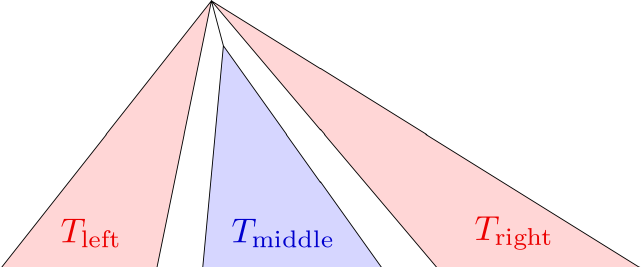}
\caption{The inductive construction.}
\label{fig:smallest_tree_construction}
\end{figure}

Formally, we let $\ell = \max_{v \in T_\tleft} v_{h}$ be the number of children of $T_\tleft$, and put
\[
T = T_\tleft \cup \ell \cdot T_\tmiddle \cup (T_\tright + (\ell +1, 0, \dots, 0)).
\] 

We now argue that $T$ is $(n,h)$-universal.
Consider a tree $T'$ of height $h$ and size $n$.
We let $p = \max_{v \in T'} v_{h}$ be the number of children of $T'$.
For each $i \in \{0, \dots, p\}$, let $n_i$ be the number of leaves in the $i$-th child of $T'$, that is, $n_i=\{v \in T' \mid v_{h} = i\}$.
Since $n_0 + n_1 + \cdots + n_p = n$, there is $i_0$ such that $n_{0} + \cdots + n_{i_0 -1} \leq \lfloor n/2 \rfloor$ and $n_{0} + \cdots + n_{i_0-1} + n_{i_0} > \lfloor n/2 \rfloor$, which implies $n_{i_0 +1} + \cdots + n_p \leq \lceil n/2 \rceil - 1$.

Consider the three trees $T'_\tleft = \{v \in T' \mid v_{h} < i_0\}$, $T'_\tmiddle = \{(v_{h-1}, \dots, v_1) \in \N^{h-1} \mid (i_0,v_{h-1}, \dots, v_1) \in T'\}$ and $T'_\tright = \{v \in T' \mid v_{d-1} > i_0\} - (i_0+1, 0, \dots, 0)$.

The trees $T'_\tleft$ and $T'_\tright$ have height $h$ and respective sizes $\leq \lfloor n/2 \rfloor$ and $\leq \lceil n/2 \rceil - 1$, hence they map respectively, via $\phi_\tleft$ and $\phi_\tright$ into $T_\tleft$ and $T_\tright$. 
Likewise, the tree $T'_\tmiddle$ has height $h-1$ and size at most $n$, hence it has a homomorphism $\phi_\tmiddle$ into $T_\tmiddle$.
These are combined into a map $\phi$ from $T'$ to $T$ by
\[
\phi(v) = \left\{ \! \begin{array}{ll}
\phi_\tleft(v) & $ if $ v_h < i_0 \\
i_0 \cdot \phi_\tmiddle(v_{h-1}, \dots, v_1) & $ if $v_h = i_0 \\
(\ell +1, 0, \dots, 0) + \phi_{\tright}(v - (i_0+1,0 \dots, 0)) & $ if $v_h > i_0. 
\end{array}
\right.
\]
We now prove that $\phi$ is indeed a tree-homomorphism from $T'$ to $T$. 
Let $k \in \{1, \dots, h\}$ and let $v,v' \in T'$.
If $v_h$ and $v'_h$ are either both $< i_0$, both $= i_0$, or both $> i_0$, then they both correspond to vertices of $T'_\tleft$ or of $T'_\tmiddle$ or of $T'_\tright$, and we conclude that $v \leq^{T'}_k v' \iff \phi(v) \leq^T_k \phi(v')$ by invoking the fact that either $\phi_\tleft, \phi_\tmiddle$ or $\phi_\tright$ is a tree-homomorphism.
Otherwise, without loss of generality $v \leq v'$, and we even have $v_h \leq i_0 \leq v'_h$ with one of these inequalities being strict, hence $v <^{T'}_k v'$, whatever the value of $k$.
By definition of $\phi$ we have $\phi(v)_h \leq \ell \leq \phi(v')_h$ with the same inequality being strict, which likewise implies that $\phi(v) <^T_k \phi(v')$, the wanted result.
Hence $T$ is universal, which concludes the proof.
\end{proof}

We now prove a lower bound on the size of universal trees.

\begin{prop}
Any $(n,h)$-universal tree has at size at least $g(n,h)$, where $g$ satisfies the following recursion
\[
\begin{array}{lll}
g(n,h) & = & \sum_{\delta = 1}^n g(\lfloor n / \delta \rfloor,h-1), \\
g(n,1) & = & n, \\
g(1,h) & = & 1.
\end{array}
\]
\end{prop}

We refer to Appendix~\ref{sec:analysis_G} for an analysis of this recurrence, leading to the lower bound stated
in Theorem~\ref{thm:universal_trees}.

The upper and lower bounds do not match perfectly.
However, 
\[
\frac{f(n,h)}{g(n,h)} \le 2^{\lceil \log(n) \rceil} \cdot \frac{\lfloor \log(n) \rfloor + h}{\lfloor \log(n) \rfloor} = O(n h),
\]
\textit{i.e.} they are polynomially related.

\begin{proof}
The bounds are clear for $h = 1$ or $n = 1$.
We let $h>1$ assumed the result known for $h-1$, and let $T$ be a $(n,h)$-universal tree, and we let $\delta \in [1,n]$. 
We construct a tree $T_\delta$ of height $h-1$ by restricting to subtrees of height 1 which have $\geq \delta$ children, that is
\[
T_\delta=\{v \in \N^{h-1} \mid |\{k \mid v \cdot k \in T\}| \geq \delta\},
\]
and argue that $T_\delta$ is $(\lfloor n / \delta \rfloor,h-1)$-universal.

\begin{figure}[ht]
\begin{center}
\includegraphics[width = 0.59 \linewidth]{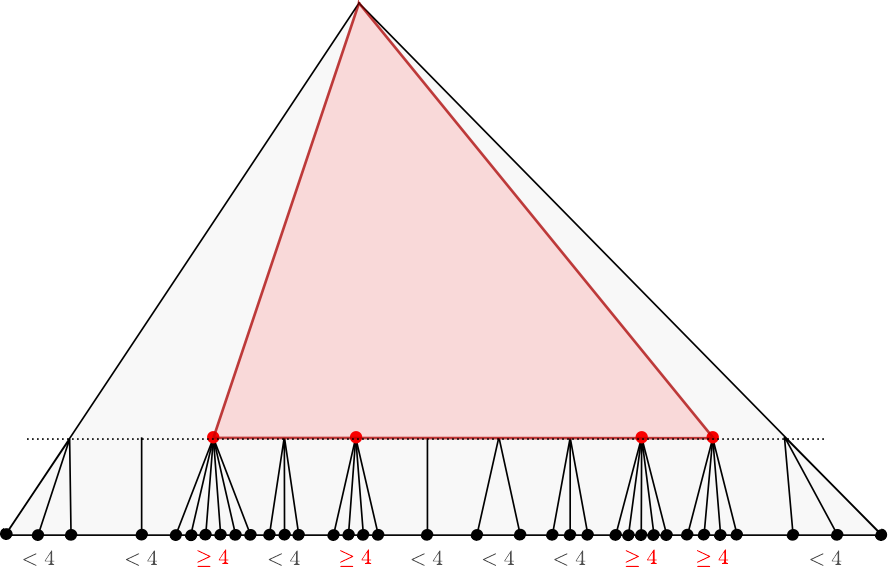}
\end{center}
\caption{In black the tree $T$ and its lowest level, and in red the tree $T_4$ obtained for $\delta=4$.}
\label{fig:parity_lower_bound_construction}
\end{figure}

Indeed, let $T'$ be a tree of height $h-1$ with $\lfloor n / \delta \rfloor$ leaves.
To each leaf of $T'$ we append $\delta$ children, that is, we consider
\[
T'' = T \times [0, \delta-1],
\]
which has size $\leq n$ and height $h$.
Since $T$ is $(n,h)$-universal, there is a tree-homomorphism $\phi$ from $T''$ to $T$.
Now, for each $v \in T'$, the elements $v \cdot 0, \dots, v \cdot (\delta -1)$ of $T''$ are different and $2$-equivalent in $T''$, hence so must be their image in $T$.
As a consequence, $\phi'(v)$ defined by $\phi'(v)=(\phi(v\cdot 0)_h, \dots, \phi(v\cdot (\delta -1)_{1})$ satisfies $|\{k \mid \phi'(v) \cdot k \in T\}| \geq k$, or in other words, $\phi'(v) \in T_\delta$, and $\phi'$ preserves all orders just because $\phi$ does.
Hence, $\phi'$ embeds $T$ in $T_\delta$, and we conclude that $T_\delta$ is $(\lfloor n / \delta \rfloor,h-1)$-universal.

In particular, the induction hypothesis tells us that $T_\delta$ has size at least $g(\lfloor n / \delta \rfloor,h-1)$.
Summing over values of $\delta$, we count each $2$-equivalent class of $T$ as many times as its size, which concludes with the wanted bound.
\end{proof}

\subsection{The complexity of solving parity games using universal graphs}
\label{subsec:complexity_parity}

Let us start with reviewing the recent results on parity games and their relationships with separating automata, GFSG automata, and universal trees.

The first quasipolynomial time algorithm comes in three flavours: the original version~\cite{CJKLS17}, as a value iteration algorithm~\cite{FJSSW17}, and as a separating automaton~\cite{BC18}.
The second quasipolynomial time algorithm called succinct progress measure lifting algorithm~\cite{JL17} is a value iteration algorithm; it was presented using the formalism of universal trees in~\cite{Fij18}, and then as a separating automaton in~\cite{CDFJLP18}.

Hence deterministic models (separating automata) are expressive enough to capture the first two algorithms; the situation changes with the register games algorithm~\cite{Leh18}. 
Indeed it was observed in~\cite{CDFJLP18} that the algorithm induces a `non-deterministic separating automaton', meaning an automaton $\auto$ satisfying $\Omega^{|n} \subseteq L(\auto) \subseteq \Omega$; this was enough to subject this third algorithm to the lower bound on the size of separating automata since the result of~\cite{CDFJLP18} applies to non-deterministic automata (in the same way as Theorem~\ref{thm:main} does).
However this is not enough to prove the correctness of the construction: under this assumption the two games $\game$
and $\game \chain \auto$ may not be equivalent.
The follow-up paper~\cite{Parys20} investigated this issue and offered a sufficient condition for a non-deterministic separating automaton to ensure the equivalence between $\game$ and $\game \chain \auto$.
This condition is almost the same as the good-for-small-games condition we introduced independently in~\cite{CF19},
the only differences are that in the framework of~\cite{Parys20} the automata read edges (meaning triples $(v,p,v')$) 
while here the automata only read priorities, and the good-for-small-games strategy depends on the graph.
The journal version of the register games algorithm suggested the name `good-for-small-games'~\cite{LehtinenB20}.

We further discuss in the conclusions section (Section~\ref{sec:conclusions}) the recent extensions of the register games algorithm.

\vskip1em
The main conclusion of our studies on universal graphs and trees is the following algorithm,
combining Theorem~\ref{thm:universal_trees} and Theorem~\ref{thm:solving_product_game_linear}.

\begin{cor}
Let $n, m, d \in \N$.
There exists an algorithm\footnote{In the RAM model with word size $w = \log(n) \cdot \log(d)$.} solving parity games with $n$ vertices, $m$ edges, and priorities in $[0,d]$ in time 
\[
O\left(m n \cdot \binom{\lceil \log(n) \rceil + d/2 - 1}{\lceil \log(n) \rceil}\right).
\]
\end{cor}

A generous upper bound on the expression above is $n^{O(\log(d))}$.
A refined calculation reveals that the expression is polynomial in $n$ and $d$ if $d = O(\log(n))$.
We refer to~\cite{JL17} for the tightest existing analysis of the binomial coefficient.

The choice of word size $w = \log(n) \cdot \log(d)$ is inherited from the generic value iteration algorithm using word 
size $\log(|V|) + \log(|Q|)$, because $|Q| = n^{O(\log(d))}$.
For this word size it is easy to verify that the two following operations on universal trees take constant time: 
computing $\delta(q,p)$ and comparing $q \le q'$.
Choosing the word size $w = \log(n)$ implies that a leaf of the universal tree 
cannot anymore be represented in a single machine word: 
the `succinct encoding' proposed in~\cite{JL17} performs the two operations above in a very low complexity (polylogarithmic in time) in this setting.

The complexity of (variants of) this algorithm has been investigated by Chatterjee, Dvor{\'a}k, Henzinger, and Svozil 
in the set-based symbolic model~\cite{CDHS18}, which is a different computational model; the differences are only in polynomial factors.

The second outcome of our analysis is a lower bound argument.

\begin{cor}{}
\label{cor:lower_bounds_parity}
All algorithms based on separating automata, GFSG automata, or universal graphs, have at least quasipolynomial complexity.
\end{cor}

This extends the main result of~\cite{CDFJLP18} which stated this result using non-deterministic separating automata and universal trees.
The statement is kept informal since its value is not in its exact formalisation but in the perspectives it offers:
all existing quasipolynomial time algorithms for parity games have been tightly related to the notion of universal trees.
Corollary~\ref{cor:lower_bounds_parity} states that to improve further the complexity of solving parity games one needs to go beyond the notion of universal trees.

\section{Mean payoff}
\label{sec:mean_payoff}
The second objective we consider is mean payoff. 
We refer to Subsection~\ref{subsec:complexity_mean_payoff} 
for a discussion on existing results.

The set of colours is a finite set $W \subseteq \Z$ with $0 \in W$, and in this context they are called weights.
\[
\MP{W} = \set{w \in W^\omega : \liminf_n \frac{1}{n} \cdot \sum_{i = 0}^{n-1} w_i \ge 0},
\]
In this section we consider graphs over the set of colours $W$.

Let us state a characterisation of graphs satisfying $\MP{W}$ in terms of cycles.
We say that a cycle is non-negative if the total sum of weights appearing on the cycle is non-negative, and negative otherwise.
\begin{lem}
Let $G$ be a graph.
Then $G$ satisfies $\MP{W}$ if and only if all cycles in $G$ are non-negative.
\end{lem}
\begin{proof}
Assume that $G$ satisfies $\MP{W}$. A cycle induces an infinite path, and since the path satisfies $\MP{W}$ the cycle is non-negative. Conversely, if $G$ does not satisfy $\MP{W}$, a simple pumping argument implies that it contains a negative cycle.
\end{proof}

The following theorem states that the mean payoff objectives satisfy our assumptions.

\begin{thmC}[\cite{EhrenfeuchtMycielski79,GKK88}]
Mean payoff objectives are positionally determined, prefix independent, and have $0$ as neutral letter.
\end{thmC}

The goals of this section are to:
\begin{itemize}
	\item describe the structure of saturated graphs and in particular its relation with integer subsets,
	\item construct universal graphs for different mean payoff conditions,
	\item derive from these constructions efficient algorithms for solving mean payoff games,
	\item offer (asymptotically) matching lower bounds on the size of universal graphs, proving the optimality of the algorithms in this family of algorithms.
\end{itemize}

\subsection{Subsets of the integers and saturated graphs}
\label{subsec:saturated_universal_graphs_mean_payoff}

Towards understanding their structure we study the properties of saturated graphs and show that they are better presented using subsets of the integers.
A subset of integers $A \subseteq \Z$ defines a graph $\Graph(A)$ which we call an integer graph:
the set of vertices is $A$ and the set of edges is
\[
\set{(x,w,x') \in A \times W \times A : x' - x \le w}.
\]

\begin{figure}[!ht]
\centering
\includegraphics[scale=.8]{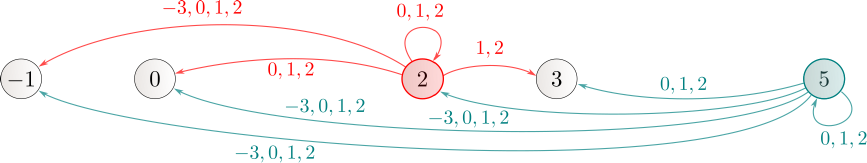}
\caption{Edges outgoing from $2$ (in blue) and $5$ (in red) in the integer graph given by subset $A=\{-1,0,2,3,5\}$ and weights $W=\{-3,0,1,2\}$. Other edges are omitted for clarity.}
\label{fig:integer_graph}
\end{figure}

Note that the definition depends on $W$.
Integer graphs contain no negative cycles and are invariant under translations: $A$ and $A + p$ for $p \in \Z$ induce the same integer graph (up to renaming vertices).

The key yet simple observation at this point is that given a graph with no negative cycles 
we can define a distance between vertices: \\
the distance $\dist(v,v')$ from a vertex $v$ to another vertex $v'$ 
is the smallest sum of the weights along a path from $v$ to~$v'$ (when such a path exists). 
%Note that such minimal paths are simple hence of length at most $|V|-1$.
Note that saturated graphs with respect to $\MP{W}$ are linear by Theorem~\ref{thm:saturated_graphs_are_linear}. 
In particular, they have a maximal element.
\begin{lem}
\label{lem:saturated_are_integer}
Let $G=(V,E)$ be a saturated graph and $v_0$ its maximal element.
Let 
\[
A = \set{\dist(v_0,v) \mid v \in V} \subseteq \N,
\]
then $G$ maps into $\Graph(A)$.
\end{lem}

\begin{proof}
We show that $\phi : V \to A$ defined by $\phi(v) = \dist(v_0,v)$ is a homomorphism from $G$ to $\Graph(A)$.
Since $v_0$ is maximal, by right-composition with the $0$-self-loop on $v_0$, for every vertex $v$ there is an edge $(v_0,0,v)$.
This implies that $\dist(v_0,v)$ is well defined, and non-negative.
Let $(v,w,v') \in E$, then $\dist(v_0,v') \le \dist(v_0,v) + w$ since a path from $v_0$ to $v$ induces a path from $v_0$ to $v'$ by adding $(v,w,v')$ at the end.
In other words, $\phi(v') - \phi(v) \le w$, so $\phi$ is indeed a homomorphism.
%Conversely, for every $d \in A$ let $v_d \in V(G)$ such that $\dist(v_d,v_0) = d$.
%We define $\psi : A \to V$ by $\psi(d) = v_d$ and show that $\psi$ is a homomorphism from $\Graph(A)$ to $G$.
%
%We can now prove that $\psi$ is a homomorphism.
%We show that if $(d,w,d') \in E(\Graph(A))$ then $e = (v_{d},w,v_{d'}) \in E(G)$ by contrapositive:
%let us assume that $e \notin E(G)$.
%Since $G$ is saturated this implies that $G_e$ does not satisfy $\MP{W}$ so it contains a negative cycle including $e$.
%Removing $e$ from this cycle yields a path from $v_{d'}$ to $v_d$ with total weight less than $-w$, 
%implying that $\dist(v_{d},v_0) < \dist(v_{d'},v_0) - w$: 
%indeed a path from $v_{d'}$ to $v_0$ induces a path from $v_d$ to $v_0$ by adding the latter path.
%Thus $\dist(v_{d'},v_0) - \dist(v_{d},v_0) > w$, meaning that $(d,w,d') \notin E(\Graph(A))$.
\end{proof}

%It is not hard to see that conversely $\Graph(A)$ maps into $A$.

The direct consequences of Lemma~\ref{lem:saturated_are_integer} is that we can restrict our attention to integer graphs.

\begin{cor}
Let $n \in \N$ and $W \subseteq \Z$.
\begin{itemize}
	\item The smallest $(n,\MP{W})$-universal graph and the smallest $(n,\MP{W})$-universal integer graph have the same size.
	\item An integer graph $\U$ is $(n,\MP{W})$-universal if and only if every integer graph of size at most $n$ maps into~$\U$.
\end{itemize}
\end{cor}
\begin{proof}
For the first item, if $\U$ is $(n,\MP{W})$-universal, without loss of generality it is saturated, 
then $\U$ maps into the integer graph constructed in Lemma~\ref{lem:saturated_are_integer}, 
so the latter is also $(n,\MP{W})$-universal.
For the second item, the direct implication is clear so we focus on the converse.
Let us assume that every integer graph of size at most $n$ maps into $\U$ and prove that $\U$ is $(n,\MP{W})$-universal.
Let $G$ be a graph of size at most $n$ satisfying $\MP{W}$, 
thanks to Lemma~\ref{lem:saturated_are_integer} it maps into an integer graph of the same size, 
which then maps into $\U$, so by composition $G$ maps into $\U$.
\end{proof}

\subsection{Upper and lower bounds for universal graphs}
\label{subsec:universal_subsets}

We now state in the following theorem upper and lower bounds on the size of universal graphs.
We present two sets of results based on $W$: the first is parametrised by the largest weight in absolute value,
and the second by the number of different weights.

\begin{thm}[Universal graphs parameterized by largest weight]\label{thm:MP_bounds_abs}
Let $n,N \in \N$, and $W=[-N,N]$.
\begin{itemize}
	\item There exist $(n,\MP{W})$-universal graphs of size $2n \cdot (nN)^{1 - 1/n}$ and of size $(n-1)N+1$.
 	\item All $(n,\MP{W})$-universal graphs have size at least $(N+1)^{1 - 1/n}$.
\end{itemize}
\end{thm}

\begin{thm}[Universal graphs parameterised by the number of weights]\label{thm:MP_bounds_card}
\hfill
\begin{itemize}
	\item For all $n \in \N$ and $W$ of cardinality $k$, there exists an $(n,\MP{W})$-universal graph of size $O(n^k)$.
	\item For all $k$ and for all $n$ large enough, there exists $W \subseteq \Z$ of cardinality $k$ such that all $(n,\MP{W})$-universal graphs have size $\Omega(n^{k-2})$.
\end{itemize}
\end{thm}

Let us start with simple upper bounds.

\begin{lem}
\label{lem:upper_bounds_universal_graphs_mean_payoff}
Let $n,N \in \N$.
\begin{itemize}
	\item The integer graph $\Graph([0,(n-1)N)])$ is $(n,\MP{[-N,N]})$-universal.
	\item For every $W$ of cardinality $k$, there exists an $(n,\MP{W})$-universal graph of size $(n-1)^k$.
\end{itemize}
\end{lem}

\begin{proof}
We show both results at the same time. 

Let $G$ be a graph of size at most $n$ satisfying $\MP{W}$, 
and $\widehat{G}$ a saturation of $G$, thanks to Lemma~\ref{lem:saturated_are_integer}
$\widehat{G}$ maps into $\Graph(A)$, implying that $G$ as well.

In the case where $W = [-N,N]$ we have $A \subseteq [0,(n-1)N]$ so $\Graph(A)$ maps into $G([0,(n-1)N])$.
This proves that the integer graph $\Graph([0,(n-1)N])$ is $(n,\MP{[-N,N]})$-universal.

In the case where $|W| \leq k$ we consider the set $\Sigma_{W,n-1}$ of all sums of $\leq n-1$ weights of $W$, 
which has cardinality at most $(n-1)^k$. 
Then $A \subseteq \Sigma_{W,n-1}$ so $\Graph(A)$ maps into $\Graph(\Sigma_{W,n-1})$.
This proves that $\Graph(\Sigma_{W,n-1})$ is $(n,\MP{W})$-universal and it has size $(n-1)^k$.
\end{proof}

We state here a simple result that we will use several times later on about homomorphisms into integer graphs.
\begin{fact}
\label{fact:equality}
Let $G$ be a graph and $\phi : G \to \Graph(A)$ a homomorphism into an integer graph. 
Let us consider a cycle
$v_0w_0v_1 \cdots v_{\ell-1}w_{\ell-1}v_0$ in $G$ of total weight $0$.
Then for $i \in [0,\ell)$, we have 
$\phi(v_{i+1}) - \phi(v_i) = w_i$,
where by convention $v_\ell = v_0$.
\end{fact}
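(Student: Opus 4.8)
## Proof plan for Fact~\ref{fact:equality}

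The plan is to exploit the defining inequality of an integer graph, namely that an edge $(x,w,x') \in E(\Graph(A))$ requires $x' - x \le w$, and combine this with the hypothesis that the cycle has total weight zero. First I would apply the homomorphism $\phi$ edge by edge along the cycle: since $(v_i, w_i, v_{i+1}) \in E(G)$ and $\phi$ is a homomorphism, we get $(\phi(v_i), w_i, \phi(v_{i+1})) \in E(\Graph(A))$, which by definition of the integer graph yields $\phi(v_{i+1}) - \phi(v_i) \le w_i$ for each $i \in [0,\ell)$ (with the convention $v_\ell = v_0$).

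Next I would sum these $\ell$ inequalities. The left-hand sides telescope: $\sum_{i=0}^{\ell-1} (\phi(v_{i+1}) - \phi(v_i)) = \phi(v_\ell) - \phi(v_0) = 0$ because $v_\ell = v_0$. The right-hand side is $\sum_{i=0}^{\ell-1} w_i$, which is exactly the total weight of the cycle, assumed to be $0$. Hence the sum of the inequalities reads $0 \le 0$, meaning the sum is tight. Since each individual term satisfies $\phi(v_{i+1}) - \phi(v_i) \le w_i$ and the sum of the slacks $w_i - (\phi(v_{i+1}) - \phi(v_i)) \ge 0$ adds up to zero, every slack must be zero, i.e. $\phi(v_{i+1}) - \phi(v_i) = w_i$ for all $i \in [0,\ell)$.

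There is essentially no obstacle here: the only thing to be slightly careful about is the bookkeeping with the cyclic index convention $v_\ell = v_0$, so that the telescoping is valid and no edge of the cycle is omitted or double-counted. The argument is otherwise a direct "sum of nonnegative terms equal to zero forces each to vanish" observation, and it uses nothing beyond the definition of $\Graph(A)$ and of a graph homomorphism.
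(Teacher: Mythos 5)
Your proof is correct and matches the paper's argument in every essential respect: both derive $\phi(v_{i+1}) - \phi(v_i) \le w_i$ from the homomorphism and the definition of $\Graph(A)$, then use telescoping together with the zero total weight to conclude no inequality can be strict (the paper phrases this last step as a contradiction, you phrase it as vanishing slacks, which is the same thing).
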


\begin{proof}
By definition of the homomorphism using the edges $(v_i,w_i,v_{i+1})$ for $i \in [0,\ell)$
\[
\begin{array}{lll}
\phi(v_1) - \phi(v_0) & \le & w_0 \\
\phi(v_2) - \phi(v_1) & \le & w_1 \\
& \ \vdots & \\
\phi(v_0) - \phi(v_{\ell-1}) & \le & w_{\ell-1}.
\end{array}
\]
Assume towards contradiction that one of these inequalities is strict. 
Summing all of them yields $0$ on both sides (because the cycle has total weight $0$), so this would imply $0 < 0$, a contradiction.
Hence all inequalities are indeed equalities.
\end{proof}

We further analyse universal graphs when $W = [-N,N]$.
We already explained in Lemma~\ref{lem:upper_bounds_universal_graphs_mean_payoff} 
how to construct an $(n,\MP{[-N,N]})$-universal graphs of size $(n-1)N+1$.
We improve on this upper bound when $N$ is exponential in $n$.

\begin{prop}
\label{prop:upper_bound_largest_weight}
There exists an $(n,\MP{[-N,N]})$-universal graph of size 
\[
2 \Big((n-1)N - \big[((n-1)N)^{1/n} - 1 \big]^n \Big) \leq 2n \cdot (nN)^{1-1/n}.
\]
\end{prop}

As discussed above, the size of this new universal graph is not always smaller than the first universal graph of size $(n-1)N+1$, 
but it is asymptotically smaller when $n^n = o(N)$.
We now give some intuition for the construction.
The first part of the proof of Lemma~\ref{lem:upper_bounds_universal_graphs_mean_payoff} shows that it suffices to embed integer graphs of the form $\Graph(A)$, when $A = \set{\dist(v_0,v), v \in V}$ for some graph $G$.
In the above proof, this was done by finding a large enough set of integers that includes any such $A$:
in the first case $B = [0,(n-1)N]$ and in the second case $B = \Sigma_{W,n-1}$. 
To conclude we used the fact that if $A \subseteq B$ then there exists a homomorphism from $\Graph(A)$ to $\Graph(B)$.

A weaker condition is the existence of $p \in \Z$ such that $A + p \subseteq B$: 
this still implies the existence of a homomorphism from $A$ to $B$.
This will allow us to remove some values from $[0,(n-1)N]$ while remaining universal. As a drawback we need to double the range to $[0,2(n-1)N)$, which explains why this new construction is not always smaller than the first one.

\begin{proof}
Let $b = ((n-1)N)^{1/n}$.
We write integers $a \in [0,2(n-1)N)$ in basis $b$, hence using $n+1$ digits written $a[i] \in [0,b)$, that is,
\[
a = \sum_{i = 0}^n a[i] b^i = \sum_{i = 0}^n a[i] ((n-1)N)^{i/n}.
\]
Note that since $a \in [0,2(n-1)N)$ the $(n+1)$\textsuperscript{th} digit is either $0$ or $1$.
We let $B$ be the set of integers in $[0,2(n-1)N)$ which have at least one zero digit among the first $n$ digits in this decomposition.
We argue that $\Graph(B)$ is $(n,\MP{[-N,N]})$-universal.

Let $G=(V,E)$ be a graph of size at most $n$ satisfying $\MP{W}$, to show that 
$G$ maps into $\Graph(B)$ we use Lemma~\ref{lem:saturated_are_integer}
and show that for $A = \set{\dist(v_0,v) : v \in V}$, the graph $\Graph(A)$ maps into $\Graph(B)$.
To this end we show that there exists $p \in \Z$ such that $p + A \subseteq B$.

Let $A = \{d_0 < \cdots < d_{n-1}\} \subseteq \Z$, where $d_0 = \dist(v_0,v_0)=0$, and for all $i \in [0,n-2]$, we have $1 \leq d_{i+1}- d_i \leq N$.

We choose $p$ of the form $\sum_{i = 0}^{n-1} a_i b^i$ for $a_i \in [0,b)$,
\textit{i.e.} the $(n+1)$\textsuperscript{th} digit is $0$, or equivalently $p < (n-1)N$.
Let us write $p_i = p + d_i$, for $i \in [0,n-1]$. 
We need to choose $p$ such that for every $i \in [0,n-1]$ we have $p_i \in B$.
Note that for all $i \in [0,n-1]$ it holds that $d_i \in [0,(n-1)N)$, so whatever the choice of $p \in [0,(n-1)N]$, we have $p_i \in [0,2(n-1)N)$.

We show how to choose $a_0, a_1, \dots, a_{n-1}$ in order to ensure that $p_0,p_1,\dots, p_{n-1} \in B$, that is, each have at least one digit among the first $n$ ones which is zero.
More precisely, we show by induction on $k \in [0,n-1]$ that there exist $a_0,\dots, a_k \in [0,b)$ such that for any choice of $a_{k+1}, \dots, a_{n-1} \in [0,b)$ and for all $i \in [0,k]$, the $i$-th digit $p_i[i]$ of $p_i$ is $0$. 
For $k = 0$, we let $a_0 = 0$, which yields $p_0[0] = 0$, independently of the values of $a_1, \dots, a_{n-1}$.

Let $a_0,\dots, a_{k-1}$ be such that for any choice of $a_k, \dots, a_{n-1}$, 
for any $i \in [0,k-1]$, we have $p_i[i] = 0$. 
Let $a_k \in [0,b)$ be the (unique) value such that 
$\left(\sum_{i = 0}^k a_i b^i + d_k \right)[k] = 0$. 
Let $a_{k+1}, \dots, a_{n-1} \in [0,b)$. 
By induction hypothesis, for any $i \in [0,k)$, we have $p_i[i] = 0$. 
Now
\[
p_k[k] = \left(p + d_k \right)[k]
 = \left(\sum_{i = 0}^{n-1} a_i b^i + d_k \right)[k]
 = \left(\sum_{i = 0}^{k} a_i b^i + d_k \right)[k] + \left(b^{k+1} \sum_{i = k+1}^{n-1} a_i b^{i-k}\right)[k] = 0,
\]
since both terms are zero.
This concludes the inductive construction of $p$, and the proof of universality of $B$.
Since $B$ excludes from $[0,2(n-1)N)$ exactly $2(b-1)^n$ integers (those that do not use the digit $0$ in their first $n$ digits), the size of $B$ is
\[
2 \Big((n-1)N - \big[((n-1)N)^{1/n} - 1 \big]^n \Big). \qedhere
\]
\end{proof}

We now prove the lower bound of Theorem~\ref{thm:MP_bounds_abs}.

\begin{prop}
\label{prop:lower_bound_largest_weight}
Any $(n,\MP{[-N,N]})$-universal graph has size at least $(N+1)^{1 - 1/n}$.
\end{prop}

\begin{proof}
Thanks to Lemma~\ref{lem:saturated_are_integer}, we let $\U = \Graph(U)$ be an $(n,\MP{[-N,N]})$-universal integer-graph given by $U \subseteq Z$. 
We construct an injective function
$f : [0,N]^{n-1} \to U^n$.
For $(w_1,\dots,w_{n-1}) \in [0,N]^{n-1}$, we consider 
$A = \{a_0,a_1, \dots a_{n-1}\} \subseteq \Z$, with $a_0=0$ and $a_j = \sum_{i=0}^j w_j$ for $j \geq 1$.

By universality, $\Graph(A)$ maps into $\U$, and we let $\phi$ be such a homomorphism.
Define $f(w_1,\dots,w_{n-1})$ to be the $n$-tuple $(\phi(a_0), \phi(a_1), \dots, \phi(a_{n-1}))$ of integer in $U$ given by
\[
f(w_1, \dots, w_{n-1}) = (\phi(a_0),\phi(a_1), \dots,\phi(a_{n-1})).
\]
To see that $f$ is injective, we apply Fact~\ref{fact:equality} to each cycle of length 2 in $\Graph(A)$ of the form
\[
a_{j},w_{j+1},a_{j+1},-w_j,a_j,
\]
for $j \in [0,n-2]$. This yields
\[
\begin{array}{lll}
\phi(a_1) - \phi(a_0) & = & w_1 \\
\phi(a_2) - \phi(a_1) & = & w_2 \\
& \ \vdots & \\
\phi (a_{n-1}) - \phi(a_{n-2}) & = & w_{n-1},
\end{array}
\]
and $f$ is injective.
We conclude that 
$(N+1)^{n-1} \le |\U|^n$, so $|\U| \ge (N+1)^{1 - 1/n}$.
\end{proof}

We finally present the lower bound of Theorem~\ref{thm:MP_bounds_card}.
We let 
$T = 1 + n + n^2 + \cdots + n^{k-2}$
and 
\[
W = \set{1,n,n^2,\dots,n^{k-2}, - \frac{n-1}{k-1} T}.
\]
Note that $W$ has indeed cardinality $k$.

\begin{prop}
\label{prop:lower_bound_number_weights}
Let $\U$ be an $(n,\MP{W})$-universal graph.
Then 
\[
|\U| \geq \left( \frac{n-1}{(k-1)^2} \right)^{\frac{(k-1)^2}{k}}.
\]
\end{prop}

\begin{proof}
Let $\U$ be an $(n,\MP{W})$-universal graph.

We consider a class of graphs which are cycles of length $n$, and later use a subset of those for the lower bound.
Let $(w_1,\dots,w_{n-1}) \in \set{1,n,\dots,n^{k-2}}$.
The vertices are $[0,n)$. 
There is an edge $(i-1,w_i,i)$ for $i \in [1,n)$ and an edge $(n-1, -\frac{n-1}{k-1} T, 0)$.
To make the total weight in the cycle equal to $0$,
we assume that each $n^j$ appears exactly $\frac{n-1}{k-1}$ many times in $w_1,\dots,w_{n-1}$.

The cycles we use for the lower bound are described in the following way.
We let $S$ be the set of sequences of $k$ integers in $[0,n)$ such that $\sum_{\ell = 1}^k s_\ell = \frac{n-1}{k-1}$,
where we use the notation $(s_1, s_2, \dots, s_k)$ for an element~$s \in S$.
A tuple of $k-1$ sequences in $S$ induces a graph $G$.
Let $(s^{(0)}, \dots, s^{(k-2)}) \in S^{k-1}$, the induced graph is partitioned into $k$ parts.
In the $i$\textsuperscript{th} part the weight $n^j$ is used exactly $s^{(j)}_i$ many times. See Figure~\ref{fig:construction}.

\begin{figure}[ht]
\centering
\includegraphics[width=.7\linewidth]{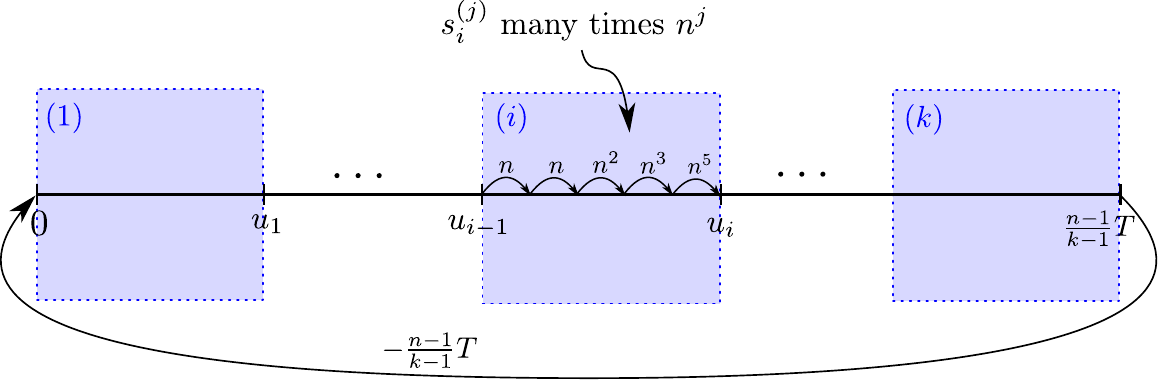}
\caption{Construction of the graph $G$ from the sequences $(s^{(0)}, \dots, s^{(k-2)}) \in S^{k-1}$. In this example, the $i$\textsuperscript{th} box is induced by the $i$\textsuperscript{th} sequence $s^{(i)}$ given by $s_0^{(i)} = 0, s^{(i)}_1 = 2, s^{(i)}_2 = 1, s^{(i)}_3 = 1, s^{(i)}_4 = 0$, and $s^{(i)}_5 = 1$.}
\label{fig:construction}
\end{figure}

The vertex in $G$ marking the end of the first box is $u_1 = \sum_{j = 0}^{k-2} s_1^{(j)} n^j$,
and more generally the vertex marking the end of the $i$\textsuperscript{th} box is 
\[
u_i = \sum_{j = 0}^{k-2} \left( \sum_{\ell = 1}^i s_\ell^{(j)} \right) n^j.
\]
Note that $\sum_{\ell = 1}^i s_\ell^{(j)}$ is the $j$\textsuperscript{th} digit of the number $u_i$ in base $n$,
since it belongs to $[0,\frac{n-1}{k-1}]$, and so it is $<n$.
Hence, $u_1,\dots,u_{k-1}$ fully determines the sequences $(s^{(0)}, \dots, s^{(k-2)})$.
We also let $u_0 = 0$.

Let $\phi : G \to \U$ be a homomorphism.
We define a function $f : S^{k-1} \to \U^k$ by
$f(s^{(0)}, \dots, s^{(k-2)}) = \left( \phi \left( u_i \right) : i \in [0,k) \right)$, which we prove to be injective.
The constraint on the sums of sequences on $S$ ensures that indeed the graph $G$ is a cycle of total weight~$0$, and then
thanks to Lemma~\ref{fact:equality} applied to the cycle $G$, we have
\[
\begin{array}{lll}
\phi(u_1) - \phi(u_0) & = & u_1 \\
\phi(u_2) - \phi(u_1) & = & u_2 \\
& \ \vdots & \\
\phi(u_{k-1}) - \phi(u_{k-2}) & = & u_{k-1},
\end{array}
\]
and as explained above the numbers $u_1,\dots,u_{k-1}$ fully determine the sequences $(s^{(0)}, \dots, s^{(k-2)})$.

Injectivity of $f$ implies that $|S|^{k-1} \le |\U|^k$.
The size of $S$ is 
\[
|S| = \binom{\frac{n-1}{k-1} + k-1}{k-1}
\ge \left( \frac{\frac{n-1}{k-1} + k-1}{k-1} \right)^{k-1} 
\ge \left( \frac{n-1}{(k-1)^2} \right)^{k-1},
\]
%\[
%\begin{array}{ccc}
%|S| & = & \binom{\frac{n-1}{k-1} + k-1}{k-1} \\
%& \ge & \left( \frac{\frac{n-1}{k-1} + k-1}{k-1} \right)^{k-1} \\
%& \ge & \left( \frac{n-1}{(k-1)^2} \right)^{k-1},
%\end{array}
%\]
which implies (for $k$ constant)
$|\U| \geq \Omega \left( n^{ \frac{(k-1)^2}{k} } \right) = \Omega \left( n^{k-2} \right)$.
\end{proof}

\subsection{The complexity of solving mean payoff games using universal graphs}
\label{subsec:complexity_mean_payoff}

There is a very large literature on mean payoff games, a model independently introduced 
by Ehrenfeucht and Mycielski~\cite{EhrenfeuchtMycielski79} and by Gurvich, Karzanov, and Khachiyan~\cite{GKK88}
and studied in different research communities including program verification and optimisation.
The seminal paper of Zwick and Paterson~\cite{ZwickPaterson96} relates mean payoff games to discounted payoff games
and simple stochastic games, and most relevant to our work, constructs an algorithm for solving mean payoff games with complexity $O(n^2 m N)$,
where $n$ is the number of vertices, $m$ the number of edges, and $N$ the largest weight in absolute value.
If the weights are given in unary $N$ is polynomial in the representation, so we say that the algorithm is pseudo-polynomial.
The question whether there exists a polynomial time algorithm for mean payoff games with the usual representation of weights, meaning in binary,
is open.
The currently fastest algorithm for mean payoff games is randomised and achieves subexponential complexity $2^{\tilde{O}(\sqrt{n})}$.
It is based on randomised pivoting rules for the simplex algorithm devised by Kalai~\cite{Kalai92,Kalai97} and Matou\v{s}ek, Sharir and Welzl~\cite{MSW96}.

We are in this work interested in deterministic algorithms for solving mean payoff games.
Up until recently there were two fastest deterministic algorithms: the value iteration algorithm of Brim, Chaloupka, Doyen, Gentilini, and Raskin~\cite{BCDGR11}, which has complexity $O(n m N)$, and the algorithm of Lifshits and Pavlov~\cite{LP07} with complexity $O(n m 2^n)$.
They are incomparable: the former is better when $N \le 2^n$ and otherwise the latter prevails.
Recently Dorfman, Kaplan, and Zwick~\cite{DKZ19} presented an improved version of the value iteration algorithm
with a complexity $O(\min (n m N, n m 2^{n/2}))$, an improvement over the previous two algorithms when $N = \Omega(n 2^{n/2})$.

Solving a mean payoff game is very related to constructing an optimal strategy, meaning one achieving the highest possible value.
The state of the art for this problem is due to Comin and Rizzi~\cite{CominRizzi17} who designed a pseudo-polynomial time algorithm.

\vskip1em
Let us now construct algorithms from universal graphs combining Theorem~\ref{thm:MP_bounds_abs} and Theorem~\ref{thm:MP_bounds_card} with Theorem~\ref{thm:solving_product_game_linear}.

\begin{cor}
Let $n, N \in \N$.
The following statements use the RAM model with word size $w = \log(n) + \log(N)$.
\begin{itemize}
	\item There exists an algorithm for solving mean payoff games with weights in $[-N,N]$ of time complexity $O(nmN)$ and space complexity $O(n)$.
	\item There exists an algorithm for solving mean payoff games with weights in $[-N,N]$ of time complexity 
	$O(mn (nN)^{1 - 1/n})$ and with space complexity $O(n)$.
	\item There exists an algorithm for solving mean payoff games with $k$ weights of time complexity $O(m n^k)$ and space complexity $O(n)$.
\end{itemize}
\end{cor} 

The first algorithm is exactly the algorithm constructed by Brim, Chaloupka, Doyen, Gentilini, and Raskin~\cite{BCDGR11}: identical data structures and complexity analysis. 
The two tasks for manipulating universal graphs, 
namely computing $\delta(q,w)$ and checking whether $q \le q'$ 
are indeed unitary operations as they manipulate numbers of order $nN$.

Let us now discuss the significance of the difference between $N$ to $N^{1 - 1/n}$,
meaning the first and second algorithms.
Since $n$ is polynomial in the size of the input, one may say that $n$ is ``small'', 
while $N$ is exponential in the size of the input when weights are given in binary, hence ``large''.
%With this intuition in mind, the multiplicative gap between upper and lower bound is bounded by $2n^2$, hence small.
For $N \le 2^n$ we have $N^{1 - 1/n} \ge \frac{1}{2} N$ so $N^{1 - 1/n}$ is essentially linear in $N$.
However when $N \ge 2^{\Omega(n^{1 + \varepsilon})}$ for $\varepsilon > 0$ then $2n^2 N^{1 - 1/n} = o(nN)$, 
so the second algorithm is indeed asymptotically faster.
To appreciate the relevance of this condition, let us recall an old result of Frank and Tardos~\cite{FT87} which implies that one can in polynomial time transform a mean payoff game into an equivalent one where $N \le 2^{4n^3} m^{m + 3}$.
Our new algorithm improves over the previous one for the range $N \in [2^{\Omega(n^{1 + \varepsilon})}, 2^{4n^3} m^{m + 3}]$.
We note however that the recent algorithm~\cite{DKZ19} (building over~\cite{BCDGR11} and~\cite{GKK88}) is always faster than the second algorithm.

\vskip1em
As for the case of parity games, the lower bounds presented in Theorem~\ref{thm:MP_bounds_abs} and Theorem~\ref{thm:MP_bounds_card} show that to obtain faster algorithms we need to go beyond algorithms
based on universal graphs.
In particular there are no quasipolynomial universal graphs for mean payoff objectives, 
and the question whether the quasipolynomial time algorithms for parity games can be extended to mean payoff games
remains open.

\section{Disjunction of a parity and a mean payoff objective}
\label{sec:disj_mean_payoff_parity}
The third objective is the disjunction of parity and mean payoff objectives.
We refer to Subsection~\ref{subsec:complexity_disjunctions_parity_mean_payoff} 
for a discussion on existing results.

The set of colours is $[0,d] \times W$ for some $W \subseteq \Z$ with $0 \in W$.
For $w \in ([0,d] \times W)^\omega$ we write $w_P \in [0,d]^\omega$ for the projection on the first component
and $w_{MP} \in W^\omega$ for the projection on the second component.
We define $\Parity(d) \vee \MP{W}$ as
\[
\set{w \in ([0,d] \times W)^\omega : w_P \in \Parity(d) \vee w_{MP} \in \MP{W}}.
\]

The following theorem states that disjunctions of parity and mean payoff objectives satisfy our assumptions.

\begin{thmC}[\cite{ChatterjeeHJ05}]
Disjunctions of parity and mean payoff objectives are positionally determined, prefix independent, and have $(0,0)$ 
as neutral letter.
\end{thmC}

Our approach for constructing universal graphs for disjunctions of parity and mean payoff objectives
differs from the previous two cases.
Here we will rely on the existing constructions of universal graphs for parity and for mean payoff objectives, 
and show general principles for combining universal graphs for disjunctions of objectives.

It is here technically more convenient to reason with separating automata, taking advantage of determinism. 
Thanks to Theorem~\ref{thm:main} this is equivalent in terms of complexity of the obtained algorithms.

\subsection{Separating automata for disjunctions of parity and mean payoff objectives} 
\label{subsec:universal_graphs_disjunctions_parity_mean_payoff}

\begin{thm}
\label{thm:separating_automata_disj_parity_mp}
Let $n,d \in \N, W \subseteq \Z$, $\auto_P$ an $(n,\Parity(d))$-separating automaton, and $\auto_{MP}$ an $(n,\MP{W})$-separating automaton.
Then there exists an $(n, \Parity(d) \vee \MP{W})$-separating automaton of size $O(d \cdot |\auto_P| \cdot |\auto_{MP}|)$.
\end{thm}

\begin{proof}
Let us write $\auto_P = (Q_P,q_{0,P},\delta_P)$ and $\auto_{MP} = (Q_{MP},q_{0,MP},\delta_{MP})$.
We define a deterministic automaton $\auto_{P \vee MP}$: 
the set of states is $[0,d] \times Q_P \times Q_{MP}$,
the initial state is $(d, q_{0,P}, q_{0,MP})$,
and the transition function is defined as follows:
$\delta((p,q_P,q_{MP}), (p',w))$ is equal to
\[
\begin{cases}
(\max(p,p'), q_P, q'_{MP}) & \text{ if $\delta_{MP}(q_{MP}, w)$ is defined and equal to $q'_{MP}$,} \\
(0, q'_P, q_{0,MP}) & \text{ otherwise, and $\delta_P(q_{P}, \max(p,p'))$ is defined and equal to $q'_{P}$}.
\end{cases}
\]
Intuitively: $\auto_{P \vee MP}$ simulates the automaton $\auto_{MP}$, storing the maximal priority seen since the last reset (or from the beginning).
If the automaton $\auto_{MP}$ rejects, the automaton resets, which means two things:
it simulates one transition of the automaton $\auto_{P}$ using the stored priority, 
and resets the state of $\auto_{MP}$ to its initial state.
The automaton $\auto_{P \vee MP}$ rejects only if $\auto_{P}$ rejects during a reset.

We now prove that $\auto_{P \vee MP}$ is an $(n,\Parity(d) \vee \MP{W})$-separating automaton.
\begin{itemize}
	\item $L(\auto_{P \vee MP}) \subseteq \Parity(d) \vee \MP{W}$.
	Let $\pi$ be an infinite path accepted by $\auto_{P \vee MP}$, we distinguish two cases by looking at the run of $\pi$.

	\begin{itemize}
		\item If the run is reset finitely many times, let us write $\pi = \pi' \pi''$
		where $\pi'$ is finite and $\pi'_{MP}$ rejected by $\auto_{MP}$, and $\pi''$ is infinite and $\pi''_{MP}$ is accepted by $\auto_{MP}$.
		Since $\auto_{MP}$ satisfies $\MP{W}$, this implies that $\pi''_{MP}$ satisfies $\MP{W}$,
		so by prefix independence, $\pi_{MP}$ satisfies $\MP{W}$ hence $\pi$ satisfies $\Parity(d) \vee \MP{W}$.
	
		\item If the run is reset infinitely many times, we may find an infinite decomposition $\pi = \pi^1 \pi^2 \dots$
		where for each $i \in \N$, the path $\pi^i_{MP}$ is rejected by $\auto_{MP}$, and its proper prefixes are not.
		Let $p_i$ be the maximum priority appearing in $\pi^i_{P}$, the run of $\pi$ over $\auto_{P \vee MP}$
		induces a run of $p_1 p_2 \dots$ over $\auto_P$.
		Since $\auto_P$ satisfies $\Parity(d)$, this implies that $p_1 p_2 \dots$ satisfies $\Parity(d)$.
		By definition of the $p_i$'s, this implies that $\pi_P$ satisfies $\Parity(d)$ so a fortiori $\pi$ satisfies $\Parity(d) \vee \MP{W}$.
	\end{itemize}		
	
	\item $(\Parity(d) \vee \MP{W})^{\mid n} \subseteq L(\auto_{P \vee MP})$.
	Let $G=(V,E)$ be a graph satisfying $\Parity(d) \vee \MP{W}$ of size at most $n$,
	we need to show that all infinite paths of $G$ are accepted by $\auto_{P \vee MP}$.
	
	We construct a graph $G_{MP}=(V, E_{MP})$ over the set of colours $W$ and a graph $G_P=(V, E_P)$ over the set of colours $[0,d]$.
	Both use the same set of vertices $V$ as $G$. 
	We prove that the graph $G_{MP}$ satisfies $\MP{W}$, which is used to prove that $G_P$ satisfies $\Parity(d)$.
	\begin{itemize}
		\item \textbf{The graph $G_{MP}$.} 
	There is an edge $(v,w,v') \in E_{MP}$ if 
	there exists $p \in [0,d]$ such that $e = (v,(p,w),v') \in E$,
	and either $p$ is odd or $p$ is even and $e$ is not contained in any negative cycle with maximum priority $p$.
	
	\vskip1em	
	We claim that $G_{MP}$ satisfies $\MP{W}$.
	Assume towards contradiction that $G_{MP}$ contains a negative cycle $C_{MP}$. 
	It induces a negative cycle $C$ in $G$, by definition of $G_{MP}$ necessarily the maximum priority in $C$ is odd.
	Hence $C$ is a negative odd cycle in $G$, a contradiction.
	
		\item \textbf{The graph $G_{P}$.}
	There is an edge $(v,p,v') \in E_P$ if 
	there exists a path in $G$ from $v$ to $v'$ 
	with maximum priority $p$ and (whose projection on the mean payoff component is) rejected by $\auto_{MP}$.

	\vskip1em	
	We claim that $G_{P}$ satisfies $\Parity(d)$.
	Assume towards contradiction that $G_P$ contains an odd cycle $C_P$.
	For each edge in this cycle there is a corresponding path rejected by $\auto_{MP}$
	with the same maximum priority.
	Putting these paths together yields an odd cycle $C$, of maximal priority $p$, in $G$ whose projection on the mean payoff component 
	is rejected by $\auto_{MP}$.
	Since $\MP{W}^{\mid n} \subseteq L(\auto_{MP})$ and as we have shown, $G_{MP}$ satisfies $\MP{W}$, 
	the projection of $C$ on the mean payoff component is not in $G_{MP}$,
	so there exists an edge $(v,(p',w),v')$ in $C$ such that $(v,w,v')$ is not in $E_{MP}$. 
	This implies that $p'$ is even, so in particular $p' < p$, and $(v,(p',w),v')$ is contained in a negative cycle $C'$ in $G$ with maximum priority $p'$.	
	Combining the odd cycle $C$ followed by sufficiently many iterations of the negative cycle $C'$ yields a path in $G$, with negative weight, and maximal priority $p$ which is odd, a contradiction.
	\end{itemize}
			
	Let $\pi$ an infinite path in $G$, we show that $\pi$ is accepted by $\auto_{P \vee MP}$.
	Let us consider first only the mean payoff component: we run $\pi$ repeatedly 	over $\auto_{MP}$,
	and distinguish two cases.
	
	\begin{itemize}
		\item If there are finitely many resets, let us write $\pi = \pi_1 \pi_2 \dots \pi_k \pi'$
		where $\pi_1,\dots,\pi_k$ are paths rejected by $\auto_{MP}$ with proper prefixes accepted by $\auto_{MP}$ and $\pi'$ is accepted by $\auto_{MP}$.
		To show that $\pi$ is accepted by $\auto_{P \vee MP}$ we need to show that the automaton $\auto_P$
		accepts the word $p_1 \dots p_k$ where $p_i$ is the maximum priority appearing in $\pi_i$ for $i \in [1,k]$.
		Indeed, $p_1 \dots p_k$ is a path in $G_P$, which is a graph of $n$ satisfying $\Parity(d)$,
		so $\auto_P$ accepts $p_1 \dots p_k$.
	
		\item If there are infinitely many resets, let us write $\pi = \pi_1 \pi_2 \dots$
		where for each $i \in \N$, the path $\pi_i$ is rejected by $\auto_{MP}$ and its proper prefixes are not.
		To show that $\pi$ is accepted by $\auto_{P \vee MP}$ we need to show that the automaton $\auto_P$
		accepts the word $p_1 p_2 \dots$, which holds for the same reason as the other case: 
		$p_1 p_2 \dots$ is a path in $G_P$. \qedhere
	\end{itemize}		
\end{itemize}

\end{proof}

\subsection{The complexity of solving disjunctions of parity and mean payoff games using separating automata}
\label{subsec:complexity_disjunctions_parity_mean_payoff}

The class of games with a disjunction of a parity and a mean payoff objective has been introduced in~\cite{ChatterjeeHJ05},
with a twist: this paper studies the case of a conjunction instead of a disjunction, which is more natural in many applications.
This is equivalent here since both parity and mean payoff objectives are dual, in other words
if the objective of Eve is a disjunction of a parity and a mean payoff objective, then 
the objective of Adam is a conjunction of a parity and a mean payoff objective.
Hence all existing results apply here with suitable changes.
The reason why we consider the objective of the opponent is that it is positionally determined,
which is the key assumption for using the universal graph technology.

The state of the art for solving disjunctions of parity and mean payoff games is due to~\cite{DaviaudJL18},
which presents a pseudo-quasi-polynomial algorithm.
We refer to~\cite{DaviaudJL18} for references on previous studies for this class of games.
As they explain, these games are logarithmic space equivalent to the same games replacing mean payoff by energy,
and polynomial time equivalent to games with weights~\cite{ScheweWZ19}, extending games with costs~\cite{FM14}.

Combining Theorem~\ref{thm:separating_automata_disj_parity_mp} with Theorem~\ref{thm:solving_product_game_linear}
yields the following result.

\begin{thm}
Let $n,d,N \in \N$.
There exists an algorithm~\footnote{In the RAM model with word size $w = \log(n) + \log(N)$.} 
for solving disjunctions of parity and mean payoff games
with $n$ vertices, $m$ edges, weights in $(-N,N)$ and priorities in $[0,d]$
of time complexity 
\[
O\left(m d \cdot \underbrace{n \cdot \binom{\lceil \log(n) \rceil + d/2 - 1}{\lceil \log(n) \rceil}}_{\text{Parity}}
\cdot \underbrace{nN}_{\text{Mean Payoff}}
\right).
\]
and space complexity $O(n)$.
\end{thm}

Our algorithm is similar to the one constructed in~\cite{DaviaudJL18}: they are both value iteration algorithms (called progress measure lifting algorithm in~\cite{DaviaudJL18}), combining the two value iteration algorithms for parity and mean payoff games.
However, the set of values are not the same (our algorithm stores an additional priority) and the proofs are very different. 
Besides being much shorter\footnote{The only generic result we use from universal graphs here is the equivalence between $\game$ and $\game \times \auto$ when $\auto$ is a separating automaton.}, one advantage of our proof is that it works with abstract universal graphs for both parity and mean payoff objectives, and shows how to combine them, whereas in~\cite{DaviaudJL18} the proof is done from scratch, extending both proofs for the parity and mean payoff objectives.

\section{Disjunction of mean payoff objectives}
\label{sec:disj_mean_payoff}
The fourth objective we consider is disjunctions of mean payoff objectives.
We refer to Subsection~\ref{subsec:complexity_disjunctions_mean_payoff} 
for a discussion on existing results.

The set of colours is a finite set $W^d \subseteq \Z^d$ for some $W \subseteq \Z$ with $0 \in W$.
For $w \in (W^d)^\omega$ and $i \in [1,d]$ we write $w_i \in W^\omega$ for the projection on the $i$-th component.
\[
\bigvee_{i \in [1,d]} \MP{W}_i = \set{w \in (W^d)^\omega : \exists i \in [1,d], w_i \in \MP{W}}.
\]

The following theorem states that disjunctions of mean payoff objectives satisfy our assumptions.

\begin{thm}[Lemma 9 in~\cite{VelnerC0HRR15}]
Disjunctions of mean payoff objectives are positionally determined, prefix independent, and have $\overline{0}$ 
as neutral letter.
\end{thm}

To obtain a universal graph for disjunctions of mean payoff objectives, we will show how to combine the existing constructions for mean payoff objectives.

\subsection{A general reduction of universal graphs for strongly connected graphs} 
\label{subsec:reduction_strongly_connected}
The first idea is very general, it roughly says that if we know how to construct universal graphs for strongly connected graphs,
then we can use them to construct universal graphs for general graphs.

We say that a graph $G$ is strongly connected if for every pair of vertices there exists a path from one vertex to the other.
Let us refine the notion of universal graphs.

\begin{defi}[Universal graphs for strongly connected graphs]
A graph $\U$ is $(n,\Omega)$-universal for strongly connected graphs if it satisfies $\Omega$ and all strongly connected graphs of size at most $n$ satisfying $\Omega$ map into $\U$.
\end{defi}

Let us give a first construction, which we refine later.
Let $G_1=(V_1,E_1)$ and $G_2=(V_2,E_2)$ be two graphs, we define their sequential product $\langle G_1,G_2 \rangle$ as follows:
the set of vertices is $V_1 \cup V_2$, and the set of edges is
\[
E_1 \ \cup\ E_2 \ \cup\ \left(V_1 \times C \times V_2\right).
\]
The sequential product is extended inductively:
$\langle G_1,\dots,G_p \rangle = \langle \langle G_1,\dots,G_{p-1} \rangle, G_p \rangle$.
Intuitively, $\langle G_1,\dots,G_p \rangle$ is the disjoint union of the graphs $G_i$'s with all possible forward edges
from $G_i$ to $G_j$ for any $i < j$.

\begin{lem}
\label{lem:strongly_connected_to_general_naive}
Let $\Omega$ be a prefix independent objective, $n \in \N$, 
and $\U$ an $(n,\Omega)$-universal graph for strongly connected graphs.
Then $\U^n = \langle \underbrace{\U,\dots,\U}_{n \text{ times}} \rangle$ is an $(n,\Omega)$-universal graph.
\end{lem}

\begin{proof}
We first note that any infinite path in $\U^n$ eventually remains in one copy of $\U$. Since $\U$ satisfies $\Omega$ and by prefix independence of $\Omega$, this implies that the infinite path satisfies $\Omega$, thus so does $\U^n$.

Let $G=(V,E)$ be a graph satisfying $\Omega$, we show that $G$ maps into $\U$.
We decompose $G$ into strongly connected components: 
let $G_1=(V_1,E_1),\dots,G_p=(V_p,E_p)$ be the maximal strongly connected components in $G$
indexed such that if there exists an edge from $G_i$ to $G_j$ then $i < j$.
Then $V$ is the disjoint union of the $V_i$'s.
Each $G_i$ is a subgraph of $G$, and since $G$ satisfies $\Omega$ then so does $G_i$.
It follows that each $G_i$ maps into $\U$, let $\phi_i$ be a homomorphism $\phi_i : G_i \to \U$.

Let us define $\phi : G \to \U^n$ by $\phi(v) = \phi_i(v)$ if $v \in V_i$.
We now argue that $\phi$ is a homomorphism $\phi : G \to \U^n$.
Let $e = (v,c,v') \in E$, 
then either $e \in E_i$, and then $\phi(e)$ belongs to $\U^n$
because $\phi_i$ is a homomorphism,
or $e$ goes from $G_i$ to $G_j$ with $i < j$, and then again $\phi(e)$ is an edge in $\U^n$.
Thus $\phi$ is a homomorphism and therefore $\U^n$ is $(n,\Omega)$-universal.
\end{proof}

In the construction above we have used the fact that $G$ decomposes into at most $n$ strongly connected components of size $n$.
We refine this argument: the total size of the strongly connected components is $n$.
The issue we are facing in taking advantage of this observation is that the sequence of sizes is not known a priori, it depends on the graph. 
To address this we use a universal sequence. 
First, here a sequence means a finite sequence of non-negative integer, for instance $(5,2,3,3)$.
The size of a sequence is the total sum of its elements, so $(5,2,3,3)$ has size $13$.
We say that $v = (v_1,\dots,v_k)$ embeds into $u = (u_1,\dots,u_{k'})$ if
there exists an increasing function $f : [1,k] \to [1,k']$ such that for all $i \in [1,k]$,
we have $v_i \le u_{f(i)}$.
For example $(5,2,3,3)$ embeds into $(4,6,1,2,4,1,3)$ but not in $(3,2,5,3,3)$.
A sequence $u$ is $n$-universal if all sequences of size at most $n$ embed into $u$.

\begin{rem}
The notion of $n$-universal sequence is equivalent to that of $(n,2)$-universal trees:
a tree of height $2$ is given by a sequence consisting of the number of leaves of each subtree of height $1$.
This explains why the constructions that follow may feel familiar after reading Section~\ref{sec:parity}.
%We could actually push this further and present the construction of a $(n,h)$-universal tree recursively on $h$,yielding the construction given in Proposition~\ref{prop:upper_bound_universal_tree}.
%This would however hide the tree structure, hence we refrained from presenting the construction in this fashion.
\end{rem}

Let us define an $n$-universal sequence $u_n$, inductively on $n \in \N$.
We set $u_0 = ()$ (the empty sequence), $u_1 = (1)$, and 
$u_n$ is the concatenation of $u_{\lfloor n/2 \rfloor}$ with the singleton sequence $(n)$
followed by $u_{n - 1 - \lfloor n/2 \rfloor}$. 
Writing $+$ for concatenation, the definition reads $u_n = u_{\lfloor n/2 \rfloor} + (n) + u_{n - 1 - \lfloor n/2 \rfloor}$
Let us write the first sequences:
\[
u_2 = (1,2),\quad 
u_3 = (1,3,1),\quad
u_4 = (1,2,4,1),\quad
u_5 = (1,2,5,1,2),\quad
u_6 = (1,3,1,6,1,2), \dots
\]

\begin{lem}
The sequence $u_n$ is $n$-universal and has size $O(n \log(n))$.
\end{lem}
\begin{proof}
We proceed by induction on $n$. The case $n = 0$ is clear, let us assume that $n > 0$.
Let $v = (v_1,\dots,v_k)$ be a sequence of size $n$, we show that $v$ embeds into $u_n$.
There exists a unique $p \in [1,k]$ such that 
$(v_1,\dots,v_{p-1})$ has size smaller than or equal to $\lfloor n/2 \rfloor$
and $(v_1,\dots,v_p)$ has size larger than $\lfloor n/2 \rfloor$.
This implies that $(v_{p+1},\dots,v_k)$ has size at most $n - 1 - \lfloor n/2 \rfloor$.
By induction hypothesis $(v_1,\dots,v_{p-1})$ embeds into $u_{\lfloor n/2 \rfloor}$
and $(v_{p+1},\dots,v_k)$ embeds into $u_{n - 1 - \lfloor n/2 \rfloor}$,
so $v$ embeds into $u_n$.

The recurrence on size is $|u_n| = |u_{\lfloor n/2 \rfloor}| + n + |u_{n - 1 - \lfloor n/2 \rfloor}|$,
it implies that $|u_n|$ is bounded by $O(n \log(n))$.
\end{proof}

We now use the universal sequence to improve on Lemma~\ref{lem:strongly_connected_to_general_naive}

\begin{lem}
\label{lem:strongly_connected_to_general}
Let $\Omega$ be a positionally determined prefix independent objective and $n \in \N$.
For each $k \in [1,n]$, let $\U_k$ be a $(k,\Omega)$-universal graph for strongly connected graphs.
Let us write $u_n = (x_1,\dots,x_k)$, 
then $\U(u_n) = \langle \U_{x_1},\dots,\U_{x_k} \rangle$ is an $(n,\Omega)$-universal graph.
\end{lem}
\begin{proof}
We follow the same lines as for Lemma~\ref{lem:strongly_connected_to_general_naive}, 
in particular the same argument implies that $\U(u_n)$ satisfies $\Omega$.

Let $G=(V,E)$ be a graph satisfying $\Omega$, we show that $G$ maps into $\U(u_n)$.
We decompose $G$ into strongly connected components $G_1=(V_1,E_1),\dots,G_p=(V_p,E_p)$ as before. 
Let us write $v = (|V_1|,\dots,|V_p|)$ the sequence of sizes of the components.
The sequence $v$ has size at most $n$, implying that $v$ embeds into $u_n$:
there exists an increasing function $f : [1,p] \to [1,|u_n|]$ such that
for all $i \in [1,p]$ we have $|V_i| \le u_{f(i)}$.
It follows that for each $i \in [1,p]$, there exists a homomorphism $\phi_i : G_i \to \U_{f(i)}$.

Let us define $\phi : G \to \U(u_n)$ by $\phi(v) = \phi_i(v)$ if $v \in V_i$.
We now argue that $\phi$ is a homomorphism $\phi : G \to \U^n$.
Let $e = (v,c,v') \in E$, 
then either $e \in E_i$, and then $\phi(e)$ is an edge in $\U(u_n)$ 
because $\phi_i$ is a homomorphism,
or $e$ goes from $G_i$ to $G_j$ with $i < j$, and then the same holds
because $f$ is increasing.
Thus $\phi$ is a homomorphism and $\U(u_n)$ is $(n,\Omega)$-universal.
\end{proof}

\begin{rem}\label{rmk:complexity_of_universal_sequences}
To appreciate the improvement of Lemma~\ref{lem:strongly_connected_to_general} over Lemma~\ref{lem:strongly_connected_to_general_naive}, 
let us consider the case where the $(k,\Omega)$-universal graph $\U_k$ for strongly connected graphs has size $\alpha k$, where $\alpha$ does not depend on $k$.
Then Lemma~\ref{lem:strongly_connected_to_general_naive} yields an $(n,\Omega)$-universal graph
of size $\alpha n^2$ while Lemma~\ref{lem:strongly_connected_to_general} brings it down to $O(\alpha n \log(n))$.

\end{rem}

\subsection{Universal graphs for disjunctions of mean payoff objectives} 
\label{subsec:universal_graphs_disjunctions_mean_payoff}
We state in the following theorem an upper bound on the construction of universal graphs for disjunctions of mean payoff objectives.

\begin{thm}
\label{thm:upper_bound_universal_graph_disj_mean_payoff}
Let $n,d,N \in \N$ and $W \subseteq \Z$ of largest weight in absolute value $N$.
There exists an $(n,\bigvee_{i \in [1,d]} \MP{W}_i)$-universal graph of size $O(n \log(n) \cdot d N)$.
\end{thm}

As suggested by the previous subsection, we first construct $(n,\bigvee_{i \in [1,d]} \MP{W}_i)$-universal graphs
for strongly connected components.
The following result shows a decomposition property.

\begin{lem}
Let $G$ be a strongly connected graph and $W \subseteq \Z$.
Assume that $G$ satisfies $\bigvee_{i \in [1,d]} \MP{W}_i$ then there exists $i \in [1,d]$ such that $G$ satisfies $\MP{W}_i$.
\end{lem}

\begin{proof}
We prove the contrapositive: assume that for all $i \in [1,d]$ the graph does not satisfy $\MP{W}_i$,
implying that for each $i \in [1,d]$ there exists a negative cycle $C_i$ around some vertex $v_i$.
By iterating each cycle an increasing number of times, we construct a path in $G$ which does not satisfy $\bigvee_i \MP{W}_i$.

To make this statement formal, we use the following property:
for any $i \in [1,d]$, any finite path $\pi$ can be extended to a finite path $\pi \pi'$ with weight $\leq -1$ on the $i$\textsuperscript{th} component.
This is achieved simply by first going to $v_i$ using strong connectedness, then iterating through cycle $C_i$ a sufficient number of times.

We then apply this process repeatedly and in a cyclic way over $i \in [1,d]$ to construct an infinite path
such that for each $i \in [1,d]$, infinitely many times the $i$-th component is less than $-1$.
This produces a path which does not satisfy $\bigvee_{i \in [1,d]} \MP{W}_i$, a contradiction.
\end{proof}

Note that the above proof relies on the $\liminf$ semantics that we assumed for defining $\MP{W}$.
No similar simple decomposition is known for the $\limsup$ semantics, which turns out to be much harder (it is in fact $\NP$-complete even when $N$ is fixed~\cite{VelnerC0HRR15}).

\begin{cor}
\label{cor:universal_graph_strongly_connected_graphs_disj_mean_payoff}
Let $n \in \N, W \subseteq \Z$, and $\U$ be an $(n,\MP{W})$-universal graph for strongly connected graphs.
We construct $d \cdot \U$ the disjoint union of $d$ copies of $\U$, where the $i$-th copy reads the $i$-th component.
Then $d \cdot \U$ is an $(n,\bigvee_{i \in [1,d]} \MP{W}_i)$-universal graph for strongly connected graphs. 
\end{cor}

We can now prove Theorem~\ref{thm:upper_bound_universal_graph_disj_mean_payoff}.

Thanks to Lemma~\ref{lem:upper_bounds_universal_graphs_mean_payoff}, there exists an $(n,\MP{W})$-universal graph of size $nN$.
Thanks to Corollary~\ref{cor:universal_graph_strongly_connected_graphs_disj_mean_payoff}, this implies
an $(n,\bigvee_{i \in [1,d]} \MP{W}_i)$-universal graph for strongly connected graphs of size $ndN$.
Now Lemma~\ref{lem:strongly_connected_to_general} yields an $(n,\bigvee_i \MP{W}_i)$-universal graph 
of size $O(n \log(n) \cdot d N)$ (see Remark~\ref{rmk:complexity_of_universal_sequences}).

\subsection{The complexity of solving disjunctions of mean payoff games using universal graphs}
\label{subsec:complexity_disjunctions_mean_payoff}

Let us note here that there are actually two variants of the mean payoff objective: using infimum limit or supremum limit.
In many cases the two variants are equivalent: 
this is the case when considering mean payoff games or disjunctions of parity and mean payoff games.
However, this is not true anymore for disjunctions of mean payoff objectives, as explained in~\cite{VelnerC0HRR15}.
Our constructions and results apply using the infimum limit, and do not extend to the supremum limit.

The main result related to disjunction of mean payoff games in~\cite{VelnerC0HRR15} (Theorem~6) is that the problem is in $\NP \cap \coNP$ and can be solved in time $O(m \cdot n^2 \cdot d \cdot N)$.
(Note that since we consider the dual objectives, the infimum limit becomes a supremum limit in~\cite{VelnerC0HRR15}.)

Combining Theorem~\ref{thm:upper_bound_universal_graph_disj_mean_payoff} with Theorem~\ref{thm:solving_product_game_linear} yields the following result.

\begin{cor}
Let $n,m,d,N \in \N$.
In the RAM model with word size $w = \log(n) + \log(N) + \log(d)$, 
there exists an algorithm for solving disjunctions of mean payoff games with weights in $(-N,N)$ 
with time complexity $O(m \cdot n \log(n) \cdot d \cdot N)$ and space complexity $O(n)$.
\end{cor}

Note that for the choice of word size, the two tasks for manipulating universal graphs, 
namely computing $\delta(q,w)$ and checking whether $q \le q'$ 
are indeed unitary operations as they manipulate numbers of order $nN$.

\section{Conclusions}
\label{sec:conclusions}
In this paper we have introduced the notion of universal graphs for constructing generic value iteration algorithms
for games with positionally determined objectives.
We have instantiated this to four classes of games, achieving or improving the state of the art for solving these games.
There are many other potential applications, for instance Rabin objectives or disjunctions of parity objectives.

We laid the first brick for the theory of universal graphs, and we hope that this new tool will open new perspectives.
An exciting research direction is to construct universal graphs for subclasses of graphs. 
For instance, it is known that solving parity games over graphs of bounded tree width or clique width is polynomial~\cite{Obdrzalek03,Obdrzalek07}. Are there universal graphs of polynomial size for these classes of graphs?
A step in this direction was made recently by Daviaud, Jurdzi{\'n}ski, and Thejaswini~\cite{DaviaudJT20}: revisiting the register games algorithm~\cite{Leh18,LehtinenB20}, they introduced a new parameter called the Strahler number and constructed universal trees for bounded Strahler numbers.

Another promising direction is to extend the use of universal graphs from games to automata. 
The first brick in the wall was thrown by Daviaud, Jurdzi{\'n}ski, and Lehtinen~\cite{DaviaudJL19},
who showed that universal trees can be used to improve constructions for alternating automata over infinite words.

\bibliographystyle{alphaurl}
\bibliography{bib}

\appendix
\section{Analysis of the function $F$}
\label{sec:analysis_F}

Define $F(p,h) = f(2^p,h)$ for $p \ge 0$ and $h \ge 1$.
Then we have
\[\begin{array}{lll}
F(p,h) & \le & F(p,h-1) + 2 F(p-1,h), \\
F(p,1) & = & 2^p, \\
F(0,h) & = & 1.
\end{array}\]
To obtain an upper bound on $F$ we define $\overline{F}$ by
\[\begin{array}{lll}
\overline{F}(p,h) & = & \overline{F}(p,h-1) + 2 \overline{F}(p-1,h), \\
\overline{F}(p,1) & = & 2^p, \\
\overline{F}(0,h) & = & 1,
\end{array}\]
so that $F(p,h) \le \overline{F}(p,h)$.
Define the bivariate generating function 
\[
\F(x,y) = \sum_{p \ge 0, h \ge 1} \overline{F}(p,h) x^p y^h.
\]
Plugging the inductive equalities we obtain
\[
\F(x,y) = \frac{y}{1 - 2x - y},
\]
from which we extract that $\overline{F}(p,h) = 2^p \binom{p+h-1}{p}$, implying $F(p,h) \le 2^p \binom{p+h-1}{p}$.
Putting everything together we obtain
\[
f(n,h) \le 2^{\lceil \log(n) \rceil} \cdot \binom{\lceil \log(n) \rceil + h - 1}{\lceil \log(n) \rceil}.
\]
%Note that this is very close and marginally better than the bound obtained in~\cite{JL17},
%which is $2^{\lceil \log(n) \rceil} \binom{\lceil \log(n) \rceil + h + 1}{\lceil \log(n) \rceil}$.

\section{Analysis of the function $G$}
\label{sec:analysis_G}

Define $G(p,h) = g(2^p,h)$ for $p \ge 0$ and $h \ge 1$.
Then we have
\[\begin{array}{lll}
G(p,h) & \ge & \sum_{k = 0}^p G(p-k,h-1), \\
G(p,1) & \ge & 1, \\
G(0,h) & = & 1.
\end{array}\]
To obtain a lower bound on $G$ we proceed similarly as for $F$.
We define $\overline{G}$ by
\[\begin{array}{lll}
\overline{G}(p,h) & = & \overline{G}(p,h-1) + \overline{G}(p-1,h), \\
\overline{G}(p,1) & = & 1, \\
\overline{G}(0,h) & = & 1,
\end{array}\]
so that $G(p,h) \ge \overline{G}(p,h)$.
%Define the bivariate generating function 
%\[
%\G(x,y) = \sum_{p \ge 0, h \ge 1} \overline{G}(p,h) x^p y^h.
%\]
%Plugging the inductive equalities we obtain
%\[
%\G(x,y) = \frac{y}{1 - x - y},
%\]
%from which we extract that $\overline{G}(p,h) = \binom{p+h-1}{p}$, implying that $G(p,h) \ge \binom{p+h-1}{p}$.
We verify by induction that $\overline{G}(p,h) = \binom{p+h-1}{p}$, which follows from the identity
\[
\binom{p+h-1}{p} = \binom{p+h-2}{p} + \binom{p+h-2}{p-1}.
\]
This implies that $G(p,h) \ge \binom{p+h-1}{p}$.
Putting everything together we obtain
\[
g(n,h) \ge \binom{\lfloor \log(n) \rfloor + h - 1}{\lfloor \log(n) \rfloor}.
\]

\newpage 

\section{Pseudo-code for linear time algorithm solving safety games}\label{sec:safety}

\begin{algorithm*}
 \KwData{A safety game $\game=(V,E)$}
 \SetKwFunction{FUpd}{Upd}
 \SetKwFunction{FSolve}{Solve}
 \SetKwProg{Fn}{Function}{:}{}
 \DontPrintSemicolon
 
\Fn{\FSolve{$\game$}}{
	$X \gets V$ \;
	$R \gets \{v \in \VE \mid v \text{ is a sink}\}$ \;
	\For{$v \in \VE$}{	
		$\cnt(v) \gets $ number of edges outgoing from $v$
	}
	\vskip1em
	\While{$R \neq \emptyset$}{
		$R \gets \FUpd(X,R,\cnt)$
	}
	\Return{$X$}
}

\vskip1em
\Fn{\FUpd{$X,R,\cnt$}}{
	$R' \gets \emptyset$ \;
	\For{edge $e=(v,v')$ from $X$ to $R$}{
		\If{$v \in \VA$}{
			$R' \gets R' \cup \{v\}$
		}
	}
	\Else{
		$\cnt(v) = \cnt(v) -1$ \;
		\If{$\cnt(v) = 0$}{
			$R' \gets R' \cup \{v\}$
		}
	}
	$X \gets X \setminus R'$ \;
	\Return{$R'$}
}
\caption{A linear time algorithm for solving safety games. Note that the procedure $\texttt{Upd}$, which computes $X_{i+1}$ and $R_i$ from $X_i$ and $R_{i-1}$, updates $X$ and $\cnt$ in place.}
\label{algo:value_iteration}
\end{algorithm*}

\newpage

\section*{Pseudo-code for generic value iteration algorithm}
\label{sec:product_safety}

\begin{algorithm*}[h]
 \KwData{A game $\game=(V,E)$ and a linear graph $\auto=(Q,\Delta)$.}
 \SetKwFunction{FUpd}{Upd}
 \SetKwFunction{FSolve}{Solve}
 \SetKwProg{Fn}{Function}{:}{}
 \DontPrintSemicolon
 
\Fn{\FSolve{$\game, \auto$}}{
	\For{$v \in V$}{
        $\theta(v) \leftarrow \min Q$
	}

	\For{$v \in \VE$}{
        $\cnt(v) \leftarrow |\{(v,c,v') \in E \mid \theta(v') \leq \delta(\theta(v),c)\}|$
	}

	$\Inv \leftarrow \{v \in \VE \mid \cnt(v)=0 \} \cup \{v \in \VA \mid \exists (v,c,v') \in E, \theta(v') > \delta(\theta(v),c)\}$ \;
	\vskip1em
	
	\While{$\Inv \neq \emptyset$}{
		$\Inv \leftarrow \FUpd(\theta, \Inv, \cnt)$
	}
	\Return{$\{v \in V \mid \theta(v) < \top\}$}
}

\vskip1em
\Fn{\FUpd{$\theta, \Inv, \cnt$}}{
	$\Inv' \gets \emptyset$ \;
	\For{$v \in \Inv$}{
		\If{$v \in \VE$}{$\theta'(v) \gets \min\{\rho(\theta(v'),c), (v,c,v') \in E\}$}
		\Else{$\theta'(v) \gets \max\{\rho(\theta(v'),c), (v,c,v') \in E\}$}
		\For{$e=(u,c,v) \in E$ such that $\theta(v) \leq \delta(\theta(u),c) < \theta'(v)$}{
			\If{$u \in \VA$}{
				$\Inv' \gets \Inv' \cup \{u\}$
			}
			\Else{
				$\cnt(u) \gets \cnt(u) - 1$ \;
				\If{$\cnt(u) = 0$}{
					$\Inv' \gets \Inv' \cup \{u\}$
				}
			}
		}
		$\theta(v) \gets \theta'(v)$ \;
		\If{$v \in \VE$}{
			$\cnt(v) \gets |\{(v,c,v') \in E \mid \theta(v') \leq \delta(\theta(v),c)\}|$
		}
	}
	\Return{$\Inv'$}
}
\caption{A generic value iteration algorithm obtained by solving the safety game $\game \chain \auto$ when $\auto$ is linear.}
\label{alg:product_safety_game}
\end{algorithm*}

\end{document}